\documentclass[a4paper, onecolumn, 10pt, unpublished]{quantumarticle}
\pdfoutput=1

\usepackage[utf8]{inputenc}
\usepackage[english]{babel}
\usepackage[T1]{fontenc}

\usepackage{amsmath}
\usepackage{hyperref}
\usepackage{tikz}
\usepackage{lipsum}

\usepackage{siunitx}
\usepackage{layouts}
\usepackage{comment}
\usepackage[numbers,sort&compress]{natbib}
\usepackage{booktabs}
\usepackage{multirow}

\usepackage[labelformat=simple]{subcaption}

\usepackage{amsthm}
\usepackage{amsmath, amssymb}
\usepackage{optidef}
\usepackage{cleveref}

\crefname{figure}{figure}{figures}

\theoremstyle{definition}
\newtheorem{theorem}{Theorem}
\newtheorem{lemma}{Lemma}
\newtheorem{corollary}{Corollary}
\newtheorem{definition}{Definition}

\numberwithin{example}{section}
\numberwithin{definition}{section}
\numberwithin{lemma}{section}
\numberwithin{theorem}{section}
\numberwithin{equation}{section}
\newcommand{\bigO}{\mathcal{O}}

\usepackage{float} 
\usepackage{algorithm}
\usepackage{algorithmicx}
\usepackage{algpseudocode}
\algrenewcommand\algorithmicrequire{\textbf{Input:}}
\algrenewcommand\algorithmicensure{\textbf{Output:}}

\begin{document}

\title{Square-root Time Atom Reconfiguration Plan for Lattice-shaped Mobile Tweezers}

\author{Koki Aoyama}
\affiliation{Graduate School of Information Science and Technology, The University of Osaka, Suita, Osaka 565-0871, Japan}
\email{k-aoyama@ist.osaka-u.ac.jp}
\orcid{0000-0001-9137-3977}

\author{Takafumi Tomita}
\affiliation{Institute for Molecular Science, National Institutes of Natural Sciences, Okazaki 444-8585, Japan}
\affiliation{SOKENDAI (The Graduate University for Advanced Studies), Okazaki 444-8585, Japan}
\orcid{0000-0003-2126-1392}

\author{Fumihiko Ino}
\affiliation{Graduate School of Information Science and Technology, The University of Osaka, Suita, Osaka 565-0871, Japan}
\email{ino@ist.osaka-u.ac.jp}
\orcid{0000-0002-5757-7631}

\maketitle

\begin{abstract}
This paper proposes a scalable planning algorithm for creating defect-free atom arrays in neutral-atom systems.
The algorithm generates a $\bigO(\sqrt N)$ time plan for $N$ atoms by parallelizing atom transport using a two-dimensional lattice pattern generated by acousto-optic deflectors.
Our approach is based on a divide-and-conquer strategy that decomposes an arbitrary reconfiguration problem into at most three one-dimensional shuttling tasks, enabling each atom to be transported with a total transportation cost of $\bigO(\sqrt N)$.
Using the Gale--Ryser theorem, the proposed algorithm provides a highly reliable solution for arbitrary target geometries.
We further introduce a peephole optimization technique that improves reconfiguration efficiency for grid target geometries.
Numerical simulations on a 632$\times$632 atom array demonstrate that the proposed algorithm achieves a grid configuration plan that reduces the total transportation cost to 1/7 of state-of-the-art algorithms, while resulting in 32\%--35\% more atom captures.
We believe that our scalability improvement contributes to realizing large-scale quantum computers based on neutral atoms.
Our experimental code is available from \url{https://github.com/kotamanegi/sqrt-time-atom-reconfigure}.
\end{abstract}

\section{Introduction}

Advances in optical tweezer technology~\cite{cite:hardware_AOD_assembler2, cite:hardware_AOD_assembler3, cite:hardware_AOD_assembler}, which enable the trapping and manipulation of neutral atoms with programmable geometries and interactions, have established neutral-atom systems as a promising platform for fault-tolerant quantum computation~\cite{arXiv_2509.13247, PhysRevA.102.063107, cite:first_experiment_on_AOD, cite:hardware_AOD_assembler, cite:review_on_aod, cite:logical_quantum_processor, graham2022multi}.
However, the initial loading of atom arrays is a stochastic process~\cite{cite:first_experiment_on_AOD, cite:hardware_AOD_assembler, manetsch2024tweezerarray6100highly}, which inevitably leads to defects, or randomly vacant sites, with a probability of approximately 50\%--60\% per site.
These defects introduce computational errors that degrade the reliability of neutral-atom systems.
Therefore, a highly scalable method for creating large-scale defect-free atom arrays is a critical prerequisite for realizing scalable fault-tolerant quantum computing based on neutral atoms.

A primary approach to creating defect-free atom arrays is to reconfigure stochastically loaded atoms into a desired geometry.
This reconfiguration can be achieved using mobile optical tweezers~\cite{cite:review_on_aod, cite:optimal_routing, cite:logical_quantum_processor}, which can simultaneously capture and transport atoms through acousto-optic deflectors (AODs)~\cite{cite:review_on_aod, cite:optimal_routing, cite:logical_quantum_processor}.
However, AOD-controlled tweezers typically impose performance limitations on atom reconfiguration.
For instance, tweezer operations account for most of the reconfiguration time ($\sim \SI{200}{ms}$) in current AOD-based systems~\cite{cite:hardware_AOD_assembler, PhysRevA.102.063107}.
Reconfiguration time must be minimized to reduce atom losses because atoms have a limited trap lifetime owing to collisions with background gas~\cite{manetsch2024tweezerarray6100highly}.
Therefore, an efficient planning algorithm for atom reconfiguration is essential to generate a sequence of transportation operations that can rapidly assemble the desired defect-free atom array.

There are two main algorithmic strategies for reducing reconfiguration time.
The first is to reduce the total travel distance of all atoms, and the second is to exploit the inherent parallelism of transportation operations to move multiple atoms simultaneously.
The transportation time of each operation depends on the travel distance of the atom being moved.
Although several cost models, such as linear~\cite{cite:logical_quantum_processor, graham2022multi} and square-root (sqrt)~\cite{cite:multiple_tweezer_kagome, cite:single_tweezer_3}, have been proposed previously, the common objective is to minimize the total travel distance.
For an $n \times n$ square lattice of atom trapping sites, an upper bound on the total travel distance is given by $\bigO(n^3)$, where $n \in \mathbb{N}$ denotes the size of the static tweezer array.

With respect to exploiting transportation parallelism, previous planning algorithms~\cite{cite:multiple_tweezer_kagome, cite:multiple_tweezer_grid, cite:optimal_routing, cite:single_tweezer_1, cite:single_tweezer_2, cite:first_experiment_on_AOD, cite:single_tweezer_3} take advantage of the parallel shuttling capabilities of modern hardware.
For example, a two-dimensional (2D) tweezer array, or a set of mobile tweezers created by driving AODs in multitone mode~\cite{cite:crossed_AOD, cite:optimal_routing, cite:logical_quantum_processor}, can simultaneously transport multiple atoms on a 2D static tweezer array.
Effectively leveraging this parallel shuttling capability presents a considerable challenge because atoms that move simultaneously must form a regular pattern, whereas the initial atoms are stochastically generated with an irregular distribution.
Prior algorithms~\cite{cite:multiple_tweezer_kagome, cite:multiple_tweezer_grid, cite:first_experiment_on_AOD} address this challenge either by restricting transportation to a one-dimensional (1D) lattice pattern or by allowing complex transportation paths involving multiple relay points.
For example, the parallel sort-and-compression (PSC) algorithm~\cite{cite:multiple_tweezer_kagome} and the Tetris algorithm~\cite{cite:multiple_tweezer_grid}, when combined with a linear cost model, are capable of reconfiguring $N$ atoms in $\bigO (N)$ time, where $N=\bigO(n^2)$.
Specifically, these algorithms complete atom reconfiguration using $\bigO(\sqrt{N})$ transportation operations, with each operation moving an atom over a travel distance of $\bigO(\sqrt{N})$.
However, these approaches can reduce the degree of parallelism, make physical implementation more difficult, and in some cases fail to produce valid reconfiguration plans.
Therefore, a new approach is needed to efficiently exploit transportation parallelism while handling irregular initial atom patterns to fully utilize the $\bigO(N)$ transportation capacity available in current AOD-based systems.

In this paper, we propose a planning algorithm that can reconfigure $N$ atoms in $\bigO(\sqrt{N})$ time by exploiting the hardware-inherent parallelism of a 2D tweezer array.
The proposed algorithm achieves the target atom geometry by simultaneously transporting stochastically loaded atoms to their desired sites using a sequence of $\bigO(\sqrt{N})$ transportation operations.
We develop the planning algorithm based on a simple transportation model, in which each transportation operation captures atoms on a 2D lattice pattern and shifts them from their current sites to one of the four neighboring sites, namely up, down, left, or right.
The proposed algorithm offers the following technical contributions.
\begin{itemize}
    \item \textbf{High scalability:} To our knowledge, the proposed algorithm is the first to reconfigure $N$ atoms in $\bigO(\sqrt{N})$ time.
The generated plan achieves high scalability by fully using the AOD hardware capability to transport multiple atoms arranged in a 2D lattice pattern.
We explicitly demonstrate that the proposed algorithm can generate an efficient $\bigO(\sqrt{N})$ time plan for arbitrary target geometries.

    \item \textbf{High reliability:} The proposed algorithm consistently generates a valid reconfiguration plan that transports atoms from any arbitrary initial geometry to any arbitrary target geometry.
    Using the Gale--Ryser theorem~\cite{Gale1957-kk,Ryser1957-kk}, we prove that two row-wise shuttling operations and one column-wise shuttling operation are sufficient to solve arbitrary reconfiguration problems.

    \item \textbf{High feasibility:} The proposed algorithm uses a set of simple shift operations, which facilitates the implementation of 2D tweezer arrays under reasonable assumptions about AODs.
Specifically, the reconfiguration plan performs simple row-wise and column-wise shuttling operations to gather $N$ atoms by moving them toward the side of the atom array.
In addition, the proposed algorithm relies only on fundamental data structures, such as priority queues, and can therefore be easily integrated with real-time programmable hardware.
These advantages relax the hardware requirements for implementing the proposed algorithm on experimental platforms.
\end{itemize}

The proposed planning algorithm itself runs in $\bigO(N \log N)$ classical computation time (hereafter referred to as \textit{planning time}), which introduces additional overhead compared to previous algorithms~\cite{cite:multiple_tweezer_kagome, cite:multiple_tweezer_grid} with $\bigO(N)$ time complexity.
However, this planning overhead is predictable as a function of $N$ and is negligible for $N \leq 10^6$, where the overall reconfiguration time is dominated by the physical movement time of the optical tweezers.

From a feasibility perspective, we assume that the 2D tweezer system can simultaneously transport multiple atoms without incurring atom loss.
Current AOD-based systems also impose a limitation on the maximum number of mobile tweezers, owing to diffraction resolution constraints~\cite{cite:hardware_AOD_assembler2, cite:lukin_3000}.
Although these feasibility challenges must be addressed in the future, we believe that our theoretical results demonstrate sub-linear time scalability of the reconfiguration process, representing a crucial step toward scalable neutral-atom quantum computers.

\section{Preliminaries: Atom Reconfiguration Model} \label{sec:Preliminaries}

\begin{figure*}[t]
  \centering
  \begin{subfigure}[t]{0.48\textwidth}
    \centering
    \begin{tabular}{l}
      \subref{subfig:a} \\
      \includegraphics[keepaspectratio, width=\linewidth]{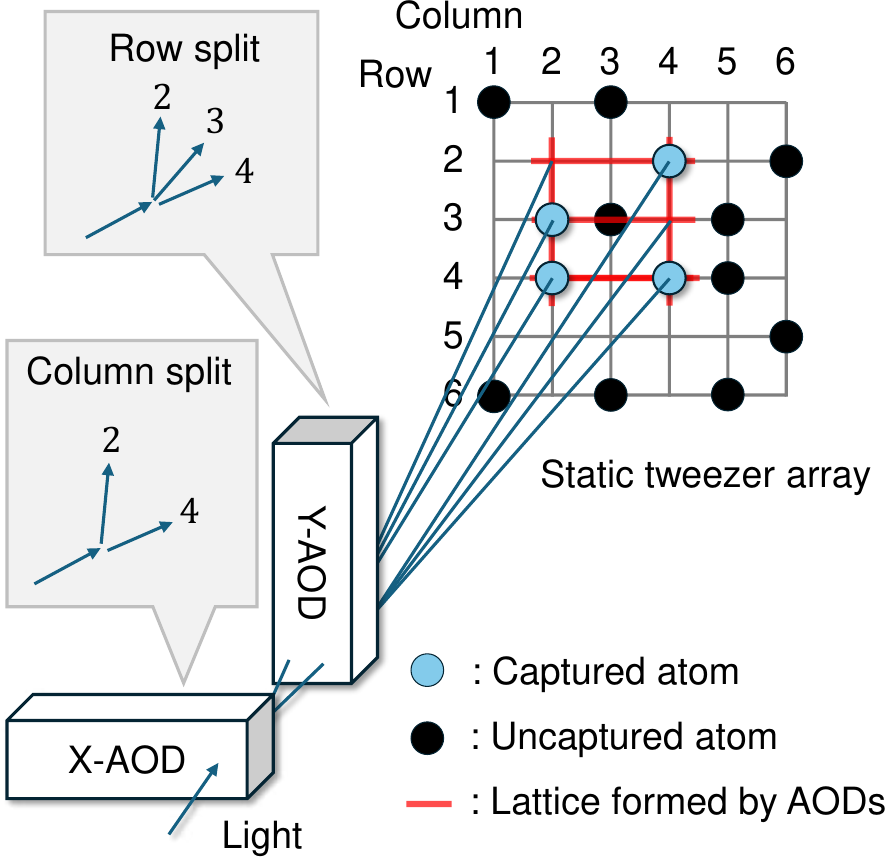}  
    \end{tabular}
    \phantomsubcaption\label{subfig:a}
 \end{subfigure}
 \hspace{0.08\textwidth}
 \begin{subfigure}[t]{0.42\textwidth}
    \centering
    \begin{tabular}{l}
       \subref{subfig:b} \\
       \includegraphics[keepaspectratio, width=\linewidth]{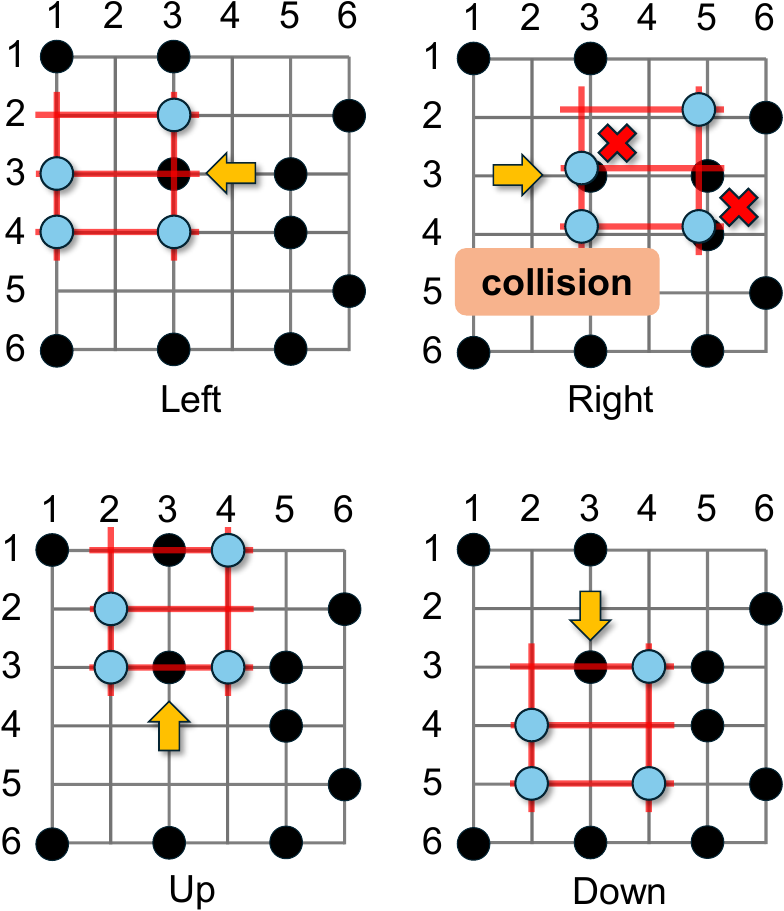}
    \end{tabular}
    \phantomsubcaption\label{subfig:b}
 \end{subfigure}
 \caption{Overview of the atom array system. The system comprises a static tweezer array and AOD tweezers, which capture atoms and transport them to other sites, respectively. \subref{subfig:a} A two-axis AOD is used to generate a two-dimensional lattice of light spots from the input beam.
 These spots are projected onto the atom array to enable atom capture and transport in the XY plane. The lattice can be dynamically reconfigured through a sequence of transport operations. \subref{subfig:b} Examples of transport operations applied to the atom array shown in (a). The atoms captured in rows $I=\{2, 3, 4\}$ and columns $J=\{2, 4\}$ are shifted from their current sites to adjacent sites in a specified direction (e.g., left, up, right, or down). In this case, the right-shift operation is prohibited to prevent collisions between moving and stationary atoms.} \label{fig:lattice_tweezer}
\end{figure*}

This study investigates a static-mobile tweezer system consisting of a static tweezer array and AOD tweezers, as shown in \Cref{fig:lattice_tweezer}.
The roles of these tweezers are as follows:
\begin{itemize}
    \item \textbf{The static tweezer array} generates an $n \times n$ square lattice of atom trapping sites, where each site can hold at most one atom and initially contains an atom with a probability $\alpha$ of approximately $50\%$. Without loss of generality, we assume a square lattice to facilitate better understanding but the proposed algorithm is applicable to rectangular lattices.
    \item \textbf{The AOD tweezers} are used to capture and transport atoms trapped in the static tweezer array, where a single beam of light is split by the AODs to form a 2D lattice beam. This assumption differs from previous PSC/Tetris algorithms~\cite{cite:multiple_tweezer_kagome,cite:multiple_tweezer_grid}, which rely on a 1D lattice beam. The 2D lattice beam captures and transports atoms located at the projected sites of the static tweezer array, and the captured atoms can be moved to other sites by adjusting the tone frequencies of the AODs.
\end{itemize}
The goal of atom reconfiguration is to realize a designated atomic geometry within the static tweezer array using the AOD tweezers.

\subsection{Tweezer System Model}
\label{subsec:model}

We abstract the static-mobile tweezer system as a 4-tuple $(n, S, \Phi, F)$, where $S$ denotes the set of valid atom geometries in the static tweezer array, $\Phi$ denotes the set of 2D tweezer operations, and $F$ denotes the objective function that evaluates a reconfiguration plan in terms of reconfiguration time.
Let $X=\{ 1, 2, \dots, n \}$ represent the set of row and column indices that specify the atom trapping sites on the lattice pattern.

\paragraph*{Configuration Space $S$.}
The configuration space $S$ consists of all atom geometries:
\begin{equation}
    S = \{ A \in \{0, 1\} ^{n\times n}\},
\end{equation}
where $A$ is a binary matrix representing an atom geometry, and the matrix element $A_{i,j} \in \{0,1\}$ indicates atom occupancy at the site with row $i$ and column $j$, where $i,j \in X$.
$A_{i,j} = 1$ means that the site $(i,j)$ is occupied by an atom, and $A_{i,j} = 0$ means that the site $(i,j)$ is vacant.
For example, a geometry $A$ with $n = 2$ has atoms located at sites $(1,1), (2,1)$, and $(2,2)$:
\begin{equation}
    A = \begin{pmatrix}
        1 & 0 \\
        1 & 1
    \end{pmatrix}.
\end{equation}

\paragraph*{2D Tweezer Operations $\Phi$.}

A 2D tweezer operation $\phi \in \Phi: S \rightarrow S$ is represented by a 3-tuple $(I,J,\delta)$, where $I \subseteq X$ and $J \subseteq X$ denote the sets of rows and columns, respectively, that define the lattice pattern generated by the mobile tweezers, and $\delta$ specifies the direction of atom transport.
In this model, an atom located at site $(i,j) \in I \times J$ can be shifted by one lattice site in one of four directions: left, right, up, or down.
We consider four 2D tweezer operators that act on all atoms within the lattice $(I,J)$:
\begin{equation}
\begin{aligned}
\mathrm{Left}(I,J)&: (i,j) \mapsto (i,j-1), \\
\mathrm{Right}(I,J)&: (i,j) \mapsto (i,j+1), \\
\mathrm{Up}(I,J)&: (i,j) \mapsto (i-1,j), \\
\mathrm{Down}(I,J)&: (i,j) \mapsto (i+1,j).
\end{aligned}
\end{equation}
\Cref{subfig:b} shows this concept with $(I,J) = (\{2,3,4\}, \{2,4\})$, where the 2D tweezers transport the atoms highlighted in blue.
The only constraint is that each site may contain at most one atom after a tweezer operation.
In the example shown in \Cref{subfig:b}, the $\mathrm{Right}(I,J)$ operation is prohibited because the sites marked with red crosses would contain two atoms following its application.

\paragraph*{Reconfiguration Plans $P$.}
Given the set of tweezer operations $\Phi$, the set of reconfiguration plans $P$ is defined as $P = \bigcup_{m=0}^\infty \Phi^m$, which intuitively means that the plan $p \in P$ consists of an arbitrary number $m$ of 2D tweezer operations.
A reconfiguration plan $p = (\phi_1, \phi_2, \dots, \phi_{|p|}) \in P$, where $|p|$ denotes the number of operations in the plan, transports atoms to other sites through these $|p|$ operations.
The initial geometry $A \in S$ is transformed to the final geometry $p(A) \in S$ by sequentially applying the operations in plan $p$:
\begin{equation}
    p(A) = \phi_{|p|}(\phi_{|p|-1}(\dots (\phi_1(A))\dots)).
\end{equation}
The concatenation of two plans, $p_1 = (\phi^{(1)}_1, \phi^{(1)}_2, \dots, \phi^{(1)}_{|p_1|})$ and $p_2 = (\phi^{(2)}_1, \phi^{(2)}_2, \dots, \phi^{(2)}_{|p_2|})$, is defined as
\begin{equation}
    p_2 \circ p_1 = (\phi^{(1)}_1, \phi^{(1)}_2, \dots, \phi^{(1)}_{|p_1|}, \phi^{(2)}_1, \phi^{(2)}_2, \dots, \phi^{(2)}_{|p_2|}).
\end{equation}

\paragraph*{Objective Function $F$.}
We estimate the reconfiguration time for plan $p$ using the following objective function:
\begin{equation}\label{eq:object}
    F(p, t_1, t_2) = |p| \, t_1 + |p| \, t_2,
\end{equation}
where $t_1 \in \mathbb{R^{+}}$ and $t_2 \in \mathbb{R^{+}}$ denote the elapsed time for one cycle of capturing and releasing atoms and the elapsed time for shifting captured atoms to neighboring sites, respectively.
The second $|p|$ in \Cref{eq:object} represents \emph{the total transportation cost} in this model.
This simple cost model assumes that (1) the reconfiguration time is dominated by the tweezer operation cost, $t_1$ and $t_2$, and (2) the cost of a tweezer operation is independent of the number of atoms moved in parallel.
Therefore, \Cref{eq:object} is independent of $N$ because AOD-based systems are capable of transporting multiple atoms simultaneously using a 2D tweezer operation.

\paragraph{Differences from previous models.}
Our model is a simplified version of earlier models~\cite{cite:atom_movr,cite:multiple_tweezer_kagome,cite:multiple_tweezer_grid}, summarized in \Cref{appendix:difficult_operations}.
The main distinctions between our model and the previous ones are outlined below.
\begin{itemize}
    \item \textbf{Simplified operations:} In our model, the 2D tweezer operations always move atoms in a single direction by a uniform distance. In contrast, previous models allow more complex operations that transport multiple atoms in different directions and through multiple relay points for long-distance transportation.
    \item \textbf{Simplified objective function:} The objective function in our model depends only on the number $|p|$ of tweezer operations, whereas in previous models it depends on both $|p|$ and the travel distance. Therefore, the cost minimization problem is replaced by an operation-minimization problem, simplifying the search for a valid reconfiguration plan.
\end{itemize}

We intentionally use a simple set of 2D tweezer operations to isolate the main challenge of massively parallel, collision-free atom transport from the added complexity of more elaborate 2D tweezer operations.
Previous models allow highly flexible but complex operations, such as compressions and expansions of the lattice pattern, to optimize the reconfiguration process; however, these operations require longer execution times and introduce additional constraints on the reconfiguration problem.
To our knowledge, it remains unclear whether the problem can be efficiently solved under such constraints.
Furthermore, recent hardware studies~\cite{cite:error_correction_AOD} have reported that lattice compression and expansion operations can lead to heating issues.
Compared with complex operations, our simple operations not only reduce constraints but also leverage the parallel transport capabilities of the 2D tweezer array.
We will later demonstrate that these simple operations are both powerful and sufficient for generating efficient reconfiguration plans.

According to the simplified operations, we also simplify the objective function used to evaluate reconfiguration plans.
Because the left, right, up, and down operations in our model involve the same travel distance and movement pattern of the 2D tweezers, we assume an identical cost for all operations, regardless of the movement direction or the specific rows or columns being shifted.
In contrast, the complex operations used in previous models can incur different costs because they move tweezers over varying travel distances.
As a result, previous models use a more accurate objective function that differentiates operational costs based on travel distance.
The details of the previous objective function are provided in \Cref{appendix:difficult_operations}.
Note that we use the previous model in our experimental evaluation to ensure a fair comparison with prior algorithms.

\subsection{Atom Reconfiguration Problem}
An instance of the reconfiguration problem is represented by a 4-tuple $(n, A^\textrm{(int)}, V, F)$, where $A^\mathrm{(int)} \in S$ denotes the initial geometry of the atom array, and $V: S \rightarrow \{0, 1\} $ is a verifier for the final atom geometry.
The initial geometry $A^\textrm{(int)}$ can be generated stochastically with a filling probability $\alpha \in [0, 1]$ as follows:
\begin{equation}
    A^\mathrm{(int)}_{i,j} = \left\{ \begin{array}{ll}
1, & \textrm{with probability } \alpha, \\
0, & \textrm{with probability } 1-\alpha.
\end{array}
\right.
\end{equation}
The objective of the atom reconfiguration problem is to find a plan $p$ satisfying
\begin{argmini!}
    {p \in P}          
    {F(p, t_1, t_2) \label{eq:prb-obj}} 
    {\label{eq:prb}}  
    {} 
    \addConstraint{V(p(A^\textrm{(int)}))}{=1. \label{eq:constr1}}
\end{argmini!}
That is, we aim to find a reconfiguration plan $p$ such that the verifier $V$ accepts $p$ as valid.
The verifier $V$ is defined according to the problem type.
We consider two types of reconfiguration problems, namely the arbitrary formation problem and the grid formation problem.
Note that in some reconfiguration problems, the target atom geometry $A^\textrm{(tgt)}$ is not explicitly given, as discussed later in this section.

\paragraph{Arbitrary formation problem.}
In the arbitrary formation problem, the objective is to construct a valid reconfiguration plan that realizes a specified target geometry $A^\mathrm{(tgt)}$, which must contain the same number of atoms as the initial geometry:
\begin{equation}
\sum^n_{i=1}\sum^n_{j=1} A^\mathrm{(tgt)}_{i,j} = \sum^n_{i=1}\sum^n_{j=1} A^\mathrm{(int)}_{i,j}.
\end{equation}
The verifier $V$ returns $1$ if and only if the final geometry matches the target configuration.
\begin{equation}
    V = \left\{ \begin{array}{ll}
1, & \textrm{if~} p(A^\textrm{(int)}) = A^\textrm{(tgt)}, \\
0, & \textrm{otherwise}.
\end{array}
\right.
\end{equation}

\paragraph{Grid formation problem.}
The objective of the grid formation problem is to find a valid reconfiguration plan that forms the largest possible $L \times L$ square atom array, where $L$ denotes the side length of the square.
The square size $L$ is given by
\begin{equation}\label{eq:define_L}
   L = \left\lfloor \sqrt{N} \right\rfloor,
\end{equation}
where $N = \sum_{i=1}^n \sum_{j=1}^n A^\mathrm{(int)}_{i,j}$.
The verifier $V$ returns $1$ if and only if the final geometry $p(A^\textrm{(int)})$ contains an $L \times L$ square atom array at any location.
\begin{equation}
    V = \left\{ \begin{array}{ll}
1, & \textrm{if~} \exists (i_0,j_0) \in X^2, \forall (i,j) \in X_1 \times X_2: p(A^\textrm{(int)})_{i,j} = 1, \\
0, & \textrm{otherwise},
\end{array}
\right. \label{cond:grid_reconf}
\end{equation}
where $i_0$ and $j_0$ denote the row and column of the top-left site in the atom array, respectively, $X_1=\{ i_0, i_0+1, \dots, i_0+L-1\}$, and $X_2= \{ j_0, j_0+1, \dots, j_0+L-1\}$.
Note that, depending on the values of $N$ and $L$, some atoms may not participate in the formation of the square atom array.
These redundant atoms can be placed at any sites as long as \Cref{cond:grid_reconf} is satisfied.
Therefore, we exclude the target geometry $A^\mathrm{(tgt)}$ from the 4-tuple $(n, A^\textrm{(int)}, V, F)$ that defines the problem instance.

\subsection{Gale--Ryser Theorem}

The Gale--Ryser theorem~\cite{Gale1957-kk,Ryser1957-kk} provides a necessary and sufficient condition for the existence of an atom geometry defined by specified row sums and column sums.
This theorem guarantees our ability to construct a sequence of atom geometries that bridge the gap between the initial geometry $A^\mathrm{(int)}$ and the target geometry $A^\mathrm{(tgt)}$.

For geometry $A \in S$, we define the row sums $R=(r_1, r_2, \dots, r_n)$ and column sums $C = (c_1, c_2, \dots, c_n)$ as follows:
\begin{equation}
     r_i = \sum_{j=1}^{n} A_{i,j}, \quad c_j = \sum_{i=1}^{n} A_{i,j},
\end{equation}
where $r_i$ and $c_j$ denote the sums of the $i$-th row and that of the $j$-th column, respectively.
A basic necessary condition for the existence of an atom geometry $A$ is that the total number of atoms must be consistent between the row sums and column sums:
\begin{equation}
    \sum_{i=1}^{n} r_i = \sum_{j=1}^{n} c_j. \label{eq:condition}
\end{equation}

Here, we consider the inverse problem that constructs an atom geometry $A$ for given row and column sums that satisfy \Cref{eq:condition}.
The Gale--Ryser theorem gives a nontrivial solution to this inverse problem.

\begin{figure}[t]
\begin{algorithm}[H]
    \caption{Atom geometry construction} \label{alg:intermediate}
    \begin{algorithmic}[1]
        \Require Matrix size $n$, row sums $R$, and column sums $C$.
        \Ensure Atom geometry $A$ satisfying constraints on $R$ and $C$.
        \State $A \gets $ zero matrix of size $n$ \Comment{Initialize the geometry}
        \For{$i = 1$ to $n$}
            \State $B_i \gets (r_i, i)$ \Comment{$B$ is a sequence holding $(r_i, i)$ for all $i \in X$}
        \EndFor
        \For{$j = 1$ to $n$}
            \State sort $B$ by non-increasing order of $r_i$
            \For{$k = 1$ to $c_j$}
                \State $(r_i, i) \gets B_k$ 
                \State $A_{i,j} = 1$
                \State $B_k \gets (r_i - 1, i)$
            \EndFor
        \EndFor
        \State \Return $A$
    \end{algorithmic}
\end{algorithm}
\end{figure}

\begin{theorem}[The Gale--Ryser theorem~\cite{Gale1957-kk,Ryser1957-kk}] \label{theorem:gale_ryser}
    Let $R=(r_1, r_2, \dots, r_n)$ and $C = (c_1, c_2, \dots, c_n)$ be two sequences of non-negative integers.
    Let $C^\prime$ denote the sequence obtained by sorting $C$ in non-increasing order.
    A binary matrix $A$ with row sums $R$ and column sums $C$ exists if and only if the following inequality is satisfied:
    \begin{equation} \label{eq:existence_iff}
        \forall k \in X: \sum^{k}_{j = 1} c^\prime_j \leq \sum^{n}_{i=1} \mathrm{min}(r_i, k).
    \end{equation}
\end{theorem}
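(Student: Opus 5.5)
Both directions can be handled by double counting for necessity and by analysing the greedy construction of \Cref{alg:intermediate} for sufficiency. As in the standard statement of the theorem, I take the total-sum condition \Cref{eq:condition} as part of the hypothesis, i.e.\ $\sum_i r_i=\sum_j c_j$. Since Gale--Ryser is usually stated with $r_i\le n$ and $c_j\le n$, I also note that $c_j\le n$ is implied by the case $k=1$ of \Cref{eq:existence_iff}. The bound $r_i\le n$ follows from the total-sum equality together with the case $k=n$ (if $\sum_j c_j=\sum_i r_i\le\sum_i \min(r_i,n)$, no $r_i$ exceeds $n$).

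\textbf{Necessity.} Suppose $A$ exists. Permuting columns changes neither the row sums nor the multiset of column sums, so I may assume the columns are ordered so that $c_j=c'_j$. Fix $k$ and count the ones in the first $k$ columns in two ways. Column-wise the count is $\sum_{j\le k}c'_j$. Row-wise, row $i$ contributes at most $\min(r_i,k)$ ones, since it has only $r_i$ ones in total and only $k$ positions available. This gives \Cref{eq:existence_iff}.

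\textbf{Sufficiency.} I would prove by induction on $n$ (the number of columns) that the greedy procedure of \Cref{alg:intermediate} succeeds. Because the condition concerns only $C'$, I first reorder the columns in non-increasing order and process them in that order; the resulting matrix is then un-permuted at the end. The procedure fills the first column by placing ones in the $c'_1$ rows with the largest $r_i$, which requires $c'_1\le n$. This yields residual row sums $\tilde r$ with $\tilde r_i=r_i-1$ on the chosen rows, together with the remaining columns $c'_2,\dots,c'_n$. The totals still agree. The key step, and the main obstacle, is to show that the residual data again satisfies the inequality: for every $k$,
\[
\sum_{j=2}^{k+1}c'_j\le\sum_i\min(\tilde r_i,k).
\]

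To establish this, I would write $\sum_i\min(r_i,k+1)-\sum_i\min(\tilde r_i,k)$ as the number of rows with $r_i\ge k+1$ plus the number of chosen rows with $r_i\le k$. I then need to show that this difference is at most $c'_1$ whenever the original inequality at level $k+1$ is tight enough to matter. I would split into two cases.

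In the first case, at least $c'_1$ rows have $r_i\ge k+1$. The greedy choice then takes all chosen rows from among these, and the difference is exactly the number of rows with $r_i\ge k+1$. The deficit must instead be handled by bounding $\sum_{j\ge 2}^{k+1}c'_j\le k\cdot\#\{i:r_i\ge k+1\}+\sum_{r_i\le k}r_i$ directly. This uses $c'_j\le c'_1$ and the level-$k$ inequality for the original sequences, with column $1$ discarded.

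In the second case, fewer than $c'_1$ rows have $r_i\ge k+1$. Then every row with $r_i\ge k+1$ is chosen, the difference equals $c'_1$, and the original inequality at level $k+1$ gives the claim directly.

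Getting the first case right, where ties among equal $r_i$ must be handled via the greedy choice of the largest entries, is the delicate part. Once the residual inequality holds, the inductive hypothesis completes the remaining $n-1$ columns, and the conclusion follows.
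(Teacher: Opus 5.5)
Your proof is correct and is essentially the argument the paper relies on: the paper does not prove the theorem itself but cites Gale and Ryser and presents the greedy \Cref{alg:intermediate} as Ryser's construction, which is what your induction verifies (in its sorted-column variant, which suffices for the theorem), and you are right to add $\sum_i r_i=\sum_j c_j$, which the printed statement omits but the paper assumes via \Cref{eq:condition}. The one detail to tighten is that the chosen rows need $r_i\ge 1$ so that $\tilde r$ stays non-negative; this is the $k=1$ inequality read as $c'_1\le\#\{i:r_i\ge 1\}$, not merely $c'_1\le n$. The ties you worry about are harmless, since in your first case $\sum_i\min(\tilde r_i,k)=\sum_i\min(r_i,k)$ exactly and $\sum_{j=2}^{k+1}c'_j\le\sum_{j=1}^{k}c'_j$.
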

According to the Gale--Ryser theorem, a valid atom geometry $A$ can be explicitly constructed if \Cref{eq:existence_iff} is satisfied.
\Cref{alg:intermediate} presents a greedy approach that constructs $A$ from the given $R$ and $C$ in $\bigO(n^2 \log n)$ time.
This greedy algorithm is directly derived from the proof in~\cite{Ryser1957-kk}.

\section{Proposed Method}
We propose a planning algorithm that can reconfigure $N = \bigO(n^2)$ atoms in $\bigO(\sqrt{N})$ time.
Formally, this algorithm provides a constructive proof of \Cref{theorem:The_result}.
\begin{theorem}[$\bigO(\sqrt N)$ time arbitrary reconfiguration]\label{theorem:The_result}
    For any problem instance $(n, A^\textrm{(int)}, V, F)$, there exists a feasible plan $p \in P$ satisfying
    \begin{equation}
        F(p, t_1, t_2) = (6n - 6)(t_1 + t_2) \in \bigO(\sqrt{N}).
    \end{equation}
\end{theorem}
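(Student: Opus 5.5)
The plan is to decompose any reconfiguration into three one-dimensional shuttling phases, as announced in the introduction: a row-wise phase, a column-wise phase, and a final row-wise phase. Each phase costs at most $2n-2$ unit shift operations, giving $|p| = 6n-6$. Since $F(p,t_1,t_2)=|p|(t_1+t_2)$ and $N=\Theta(n^2)$ for a random loading with fixed $\alpha>0$ (so $n=\bigO(\sqrt N)$), the stated bound follows once this three-phase plan exists.

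First, I would treat the grid formation problem as an arbitrary formation problem by fixing a target $A^{(\mathrm{tgt})}$ with $N$ atoms containing the $L\times L$ square. So it suffices to handle arbitrary $A^{(\mathrm{int})}, A^{(\mathrm{tgt})}$ with equal atom counts.

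\textbf{Intermediate geometries.} I would construct intermediates $A^{(1)}$ and $A^{(2)}$ such that:
\begin{itemize}
\item $A^{(1)}$ has the same row sums as $A^{(\mathrm{int})}$;
\item $A^{(2)}$ has the same column sums as $A^{(1)}$;
\item $A^{(2)}$ has the same row sums as $A^{(\mathrm{tgt})}$.
\end{itemize}
Concretely, let $A^{(1)}$ be $A^{(\mathrm{int})}$ with each row's atoms arranged so that column sums follow a convenient pattern. A natural first choice is to left-justify each row, so $A^{(1)}$ is a Ferrers-type matrix whose column sums $c^{(1)}$ form the conjugate partition of the sorted row sums $R^{(\mathrm{int})}$. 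Then I need $A^{(2)}$ with column sums $c^{(1)}$ and row sums $R^{(\mathrm{tgt})}$.

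By Theorem~\ref{theorem:gale_ryser}, this requires $\sum_{j\le k} c'_j \le \sum_i \min(r^{(\mathrm{tgt})}_i,k)$ for all $k$. This can fail in general. To fix it, I would instead choose the arrangement in $A^{(1)}$ to make the column sums as ``flat'' as possible, since flatter column sums are majorized by everything. Specifically, I would place the atoms of each row so that column counts are balanced, e.g.\ round-robin filling that keeps every column sum within one of the average $N/n$. A balanced vector is majorized by the conjugate of any row-sum vector with the same total and entries $\le n$, so the Gale--Ryser condition holds automatically. Algorithm~\ref{alg:intermediate} then builds $A^{(2)}$.

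Checking this majorization claim carefully is a key lemma. In particular, I must ensure that the $A^{(1)}$ produced by balancing is itself realizable with row sums $R^{(\mathrm{int})}$, which again follows from Theorem~\ref{theorem:gale_ryser} applied to balanced column sums.

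\textbf{Phases as unit shifts.} Each phase only permutes atoms within rows (phases 1 and 3) or within columns (phase 2), preserving order along the line. Within one row, matching the $k$-th atom of the source to the $k$-th occupied site of the destination gives order-preserving moves, each of displacement at most $n-1$. I would first do all leftward moves, then all rightward moves. At step $t$ of the left sweep, apply $\mathrm{Left}(\{i\},J_t)$ grouped across rows: simultaneously for every row, pick the atoms that still need to move left.

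\textbf{Main obstacle.} The AOD lattice must be a product $I\times J$, but different rows need different column sets. I expect this to be the hardest step. The remedy I would try first is that the operations in the model are parameterized per call, so the rows must be handled by one product set. To achieve this, I would schedule a sweep processing columns from left to right (for left moves). At time $t$, select $J=\{t+1\}$-type column fronts with $I$ being all rows whose atom currently at the front still needs to move, arguing that collisions never occur because moving atoms enter sites just vacated or empty under order preservation.

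If a single column per step is too slow, I would use a ``sorting-network''/compression argument instead. Moving every atom in a set of columns that has not yet reached its destination, with $I=X$ and collision-freeness guaranteed by processing frontmost atoms first, completes a left sweep in $n-1$ steps and a right sweep in $n-1$ steps. That gives $2n-2$ per phase and $6n-6$ in total. Verifying this collision-free product-set scheduling is where I expect the real work to lie.
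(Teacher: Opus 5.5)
Your decomposition is sound. It is the paper's three-step strategy transposed: the paper shuttles column-wise, row-wise, column-wise through a row-balanced intermediate (built by round-robin, Algorithm~\ref{alg:leveling}), while you shuttle row-wise, column-wise, row-wise through a column-balanced one. Your majorization claim is also correct. A near-uniform integer vector of length $n$ is majorized by every nonnegative integer vector of the same length and sum, in particular by the conjugate of $R^{(\mathrm{tgt})}$. This is the same fact as the identity $\sum_i \min(r_i,k)=\min(kn,N)$ used in Lemma~\ref{lemma:existence_cfin}.

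The genuine gap is the step you flag yourself. You never give a schedule of product-set operations that realizes a 1D shuttling task in $2n-2$ steps, so the count $6n-6$ is not established. Direct order-preserving matching does not fit the model, because $\mathrm{Left}(I,J)$ moves every atom in $I\times J$ (there is no selective transfer). Whether an atom still has to move is row-dependent. A column in $J$ can hold a mover in one row and, in another selected row, an atom that must stay: one already at its destination, or a right-mover waiting for the second sweep. Your fallback with $I=X$ and $J$ the columns containing not-yet-arrived atoms fails for exactly this reason. The one-column-per-step variant has no $n-1$ bound.

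The missing idea is to drop source-to-destination matching and route every row through a canonical geometry shared by both ends: the left-aligned geometry, which depends only on the row sums.

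\emph{Compression.} Apply $\mathrm{Left}(I_x,\{x+1,\dots,n\})$ for $x=n-1,\dots,1$, where $I_x$ is the set of rows whose site in column $x$ is currently empty. Because $J$ is a full suffix, each selected row's entire tail shifts rigidly into one empty site, so nothing collides. By induction, every row is nonincreasing on columns $\geq x$ after step $x$.

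\emph{Delivery.} Run the mirror image $\mathrm{Right}(\{i : A^{(\mathrm{tgt})}_{i,x}=0\},\{x,\dots,n-1\})$ for $x=1,\dots,n-1$. The invariant is that columns $<x$ agree with the target and the rest of each row is left-aligned from column $x$. Equality of row sums guarantees two things. A row with $A^{(\mathrm{tgt})}_{i,x}=1$ already has an atom at $(i,x)$. A row with $A^{(\mathrm{tgt})}_{i,x}=0$ has at most $n-x$ remaining atoms, so column $n$ is empty there and the shift is legal.

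Each stage uses exactly $n-1$ operations (Lemmas~\ref{lemma:left-align} and~\ref{lemma:inverse_left_align}, Theorem~\ref{thm:1d_shuttling_problem}). Individual atoms may travel up to $2(n-1)$ sites rather than their direct displacement. Since $F$ counts operations, not per-atom distance, this costs nothing in the bound.
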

The main concept of the proposed algorithm is to exploit maximum parallelism by simultaneously transporting multiple atoms, and to realize this goal, the algorithm adopts a divide-and-conquer approach that reduces reconfiguration time with nearly $\bigO(N)$ times speedup by moving $\bigO(N)$ atoms at once.
As shown in \Cref{fig:overview_proposed}, the proposed algorithm consists of two main components.
\begin{enumerate}
    \item \textbf{Decomposer:} The decomposer breaks the reconfiguration problem into at most three 1D shuttling tasks, such that executing the tasks in sequence reproduces the original problem. As shown in \Cref{fig:1d_shuttling_task}, each 1D shuttling task is a simplified reconfiguration problem in which the target geometry can be achieved by moving atoms along a specific 1D direction.
    \item \textbf{1D Shuttling Solver:} The solver generates a reconfiguration plan for each 1D shuttling task, in which atoms move a distance of at most $(2n - 2) \in\bigO(\sqrt{N})$. The solutions from all tasks are then merged to produce a complete plan for the original problem.
\end{enumerate}

Using \Cref{theorem:gale_ryser}, we show that our divide-and-conquer approach, which generates at most three 1D shuttling tasks, is effective for any reconfiguration problem.
Therefore, the proposed algorithm produces a $3(2n - 2) = (6n-6) \in \bigO(\sqrt{N})$ time plan for any reconfiguration problem (\Cref{theorem:The_result}).
Additionally, we introduce a peephole optimization technique for grid formation problems to achieve higher efficiency in this common scenario.

\begin{figure}[t]
  \centering
  \includegraphics[keepaspectratio, width=.6\linewidth]{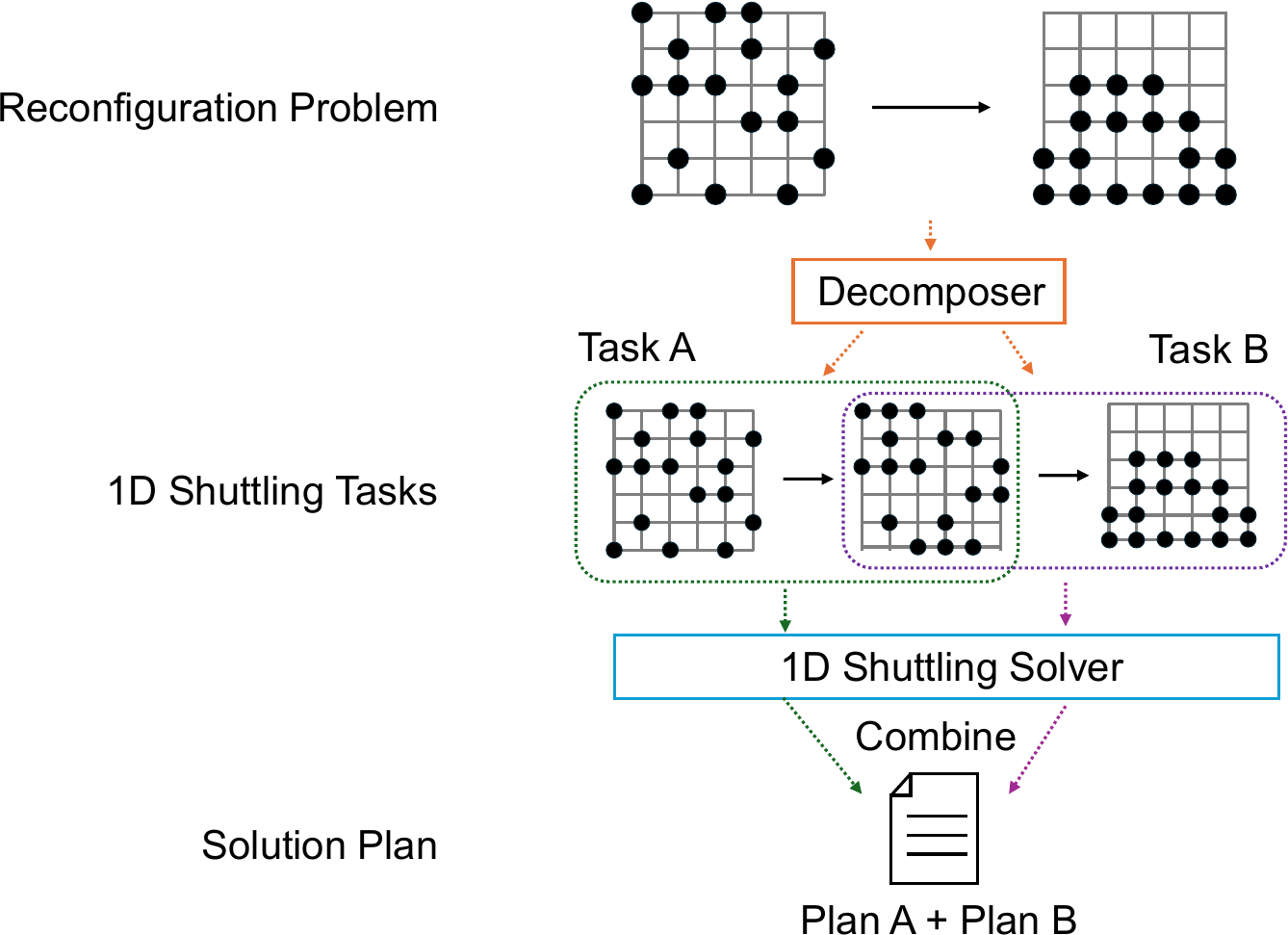}
  \caption{Overview of the proposed algorithm. The problem is first decomposed into multiple 1D shuttling tasks. A solver then processes each 1D shuttling task to obtain local solutions. Finally, the algorithm merges these local solutions to generate a complete reconfiguration plan for the entire problem.}
  \label{fig:overview_proposed}
\end{figure}

\begin{figure}[t]
  \centering
  \begin{subfigure}[t]{0.48\textwidth}
    \centering
    \begin{tabular}{l}
      \subref{subfig:row_wise_shuttling} \\
      \includegraphics[keepaspectratio, width=.9\linewidth]{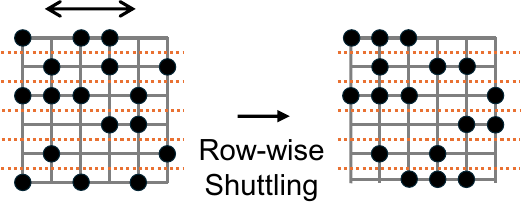}
    \end{tabular}
    \phantomsubcaption\label{subfig:row_wise_shuttling}
 \end{subfigure}
 \begin{subfigure}[t]{0.48\textwidth}
    \centering
    \begin{tabular}{l}
       \subref{subfig:column_wise_shuttling} \\
       \includegraphics[keepaspectratio, width=.9\linewidth]{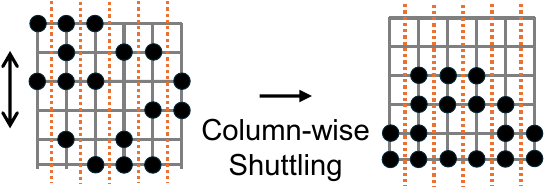}
    \end{tabular}
    \phantomsubcaption\label{subfig:column_wise_shuttling}
 \end{subfigure}
  \caption{Examples of 1D shuttling tasks for a 2D atom array. \subref{subfig:row_wise_shuttling} The row-wise shuttling task uses only left- and right-shift operations to transform the initial configuration into the target configuration. \subref{subfig:column_wise_shuttling} The column-wise shuttling task is analogous, using up- and down-shift operations.}  \label{fig:1d_shuttling_task}
\end{figure}

\begin{figure*}[t]
        \centering
        \includegraphics[keepaspectratio, width=0.97\textwidth]{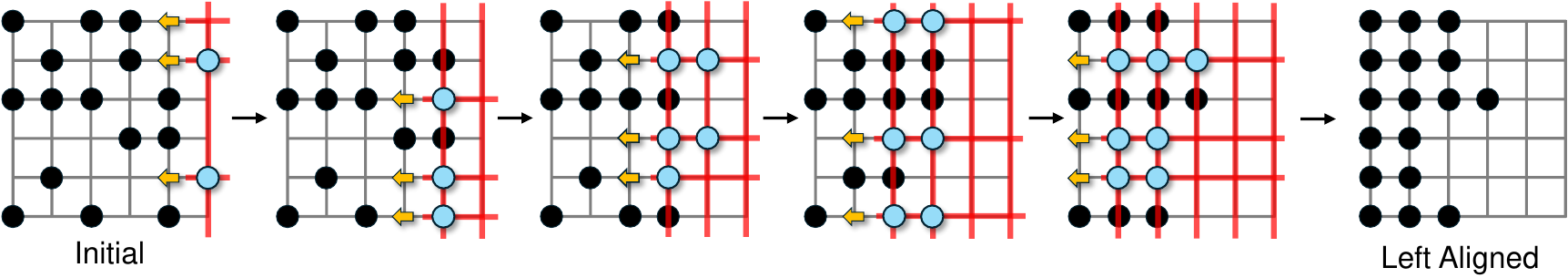}
        \caption{Example of the leftward alignment task. Atoms in the same column are simultaneously shifted leftward from the rightmost column, producing the left-aligned atom configuration.}
        \label{fig:left-align}
\end{figure*}

\begin{figure}[t]
\begin{algorithm}[H]
    \caption{Planning algorithm for the leftward alignment task} \label{alg:left-align}
    \begin{algorithmic}[1]
        \Require The initial geometry $A^\mathrm{(int)}$.
        \Ensure The reconfiguration plan $p$ that converts $A^\mathrm{(int)}$ to the left-aligned geometry $A^\mathrm{(left)}$.
        \State $p \gets \emptyset$ \Comment{Initialize the plan}
        \For{$x = n-1$ downto $1$}
            \State $J \gets \{j \in X \mid x < j \leq n\}$
            \State $I \gets \{i \in X \mid A^\mathrm{(int)}_{i,x} = 0\}$ \Comment{Find empty sites for every row}
            \State Append the $\mathrm{Left}(I,J)$ operation to $p$ \Comment{Left shift}
        \EndFor
        \State \Return $p$
    \end{algorithmic}
\end{algorithm}
\end{figure}

\begin{figure*}[t]
        \centering
        \includegraphics[keepaspectratio, width=0.97\textwidth]{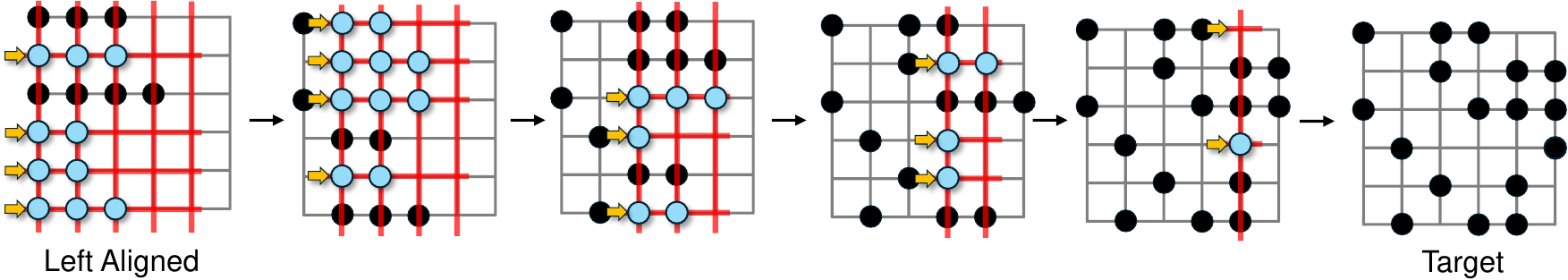}
        \caption{Example of the rightward delivery task. Starting from the left-aligned atom geometry, the selected atoms in each column are shifted rightward to produce the target atom geometry.}
        \label{fig:inverse-left-align}
\end{figure*}

\subsection{1D Shuttling Solver}

We first describe the 1D shuttling solver to facilitate understanding of the decomposer design.
Our solver generates an efficient $(2n-2)$ time plan for an arbitrary 1D shuttling task.
As a specific instance of a 1D shuttling task, we define the row-wise shuttling task, which satisfies the row-sum condition for every row:
\begin{equation}\label{eq:same_row_count}
    \forall i \in X: r^\mathrm{(int)}_{i} = r^\mathrm{(tgt)}_{i},
\end{equation}
where $r^\mathrm{(int)}_{i}$ and $r^\mathrm{(tgt)}_{i}$ are the row sums of the $i$-th row in the initial geometry $A^\mathrm{(int)}$ and the corresponding row in the target geometry $A^\mathrm{(tgt)}$, respectively.
Intuitively, \Cref{eq:same_row_count} requires that each row in $A^\mathrm{(int)}$ and $A^\mathrm{(tgt)}$ contains the same number of atoms, eliminating the need to transport atoms between different rows.
Without loss of generality, we focus only on the row-wise shuttling task because a column-wise shuttling task can be transformed into a row-wise shuttling task by rotating the atom geometry matrix $A$ by 90\textdegree.

The 1D shuttling solver generates a reconfiguration plan consisting of two steps: (1) a leftward alignment step and (2) a rightward delivery step.
Each step is efficiently executed using $(n-1)$ operations by moving multiple atoms simultaneously whenever possible.
Here, we describe the concept and idea of the 1D shuttling solver.
A formal proof of its correctness is provided in \Cref{appendix:1d_shuttling_solver}.

\begin{definition}
A geometry $A \in S$ is said to be \textit{left-aligned} if it satisfies the inequality in \Cref{eq:left-aligned-property}.
\begin{equation}\label{eq:left-aligned-property}
    \forall (i,j) \in X^2: A_{i,j} \geq A_{i, j+1}.
\end{equation}
\end{definition}

\paragraph*{Leftward alignment step (\Cref{fig:left-align}).}
The solver generates the left-aligned geometry $A^\mathrm{(left)}$ by shifting all atoms in $A^\mathrm{(int)}$ to the left. More formally, the left-aligned geometry $A^\mathrm{(left)}$ satisfies \Cref{eq:left-aligned-property} and the following condition:
\begin{equation}\label{constraint:left-align}
\begin{aligned}
    \forall i \in X: \sum^n_{j = 1} A^\textrm{(left)}_{i,j} = \sum^n_{j=1} A^\textrm{(int)}_{i,j}.
\end{aligned}
\end{equation}
Note that $A^\mathrm{(left)}$ can be constructed from an arbitrary initial geometry.
\Cref{alg:left-align} presents the planning algorithm for the leftward alignment step, in which atoms located in the same column are simultaneously shifted leftward starting from the rightmost column.
The key idea is to eliminate empty sites in a given column using a single operation by filling them simultaneously.
To implement this idea, the algorithm inspects columns from right to left, beginning with column $x$ (line~2).
For each column $x$, the algorithm identifies the set of rows $I$ that include an empty site in that column (line~4).
For such rows $I$, the algorithm appends a left-shift operation that removes empty sites in column $x$ by shifting all atoms in columns $x+1, x+2, \dots, n$ leftward.
In other words, more atoms are transported simultaneously as the inspection proceeds from right to left.

The planning time of \Cref{alg:left-align} is $\bigO(n^2) = \bigO(N)$, because the algorithm accesses each element of the atom array at most once.
In contrast, the execution time of the plan generated by \Cref{alg:left-align} is equal to $(n - 1) (t_1 + t_2)$ because the loop at line~2 iterates $n-1$ times and each iteration appends a tweezer operation.

\paragraph*{Rightward delivery step (\Cref{fig:inverse-left-align}).}
The objective of the rightward delivery step is to transform the left-aligned geometry $A^\mathrm{(left)}$ into the target geometry $A^\mathrm{(tgt)}$ by transporting the appropriate atoms rightward.
Clearly, $A^\mathrm{(left)}$ must be the left-aligned geometry of $A^\mathrm{(tgt)}$.
This observation motivates the extension of the left-alignment procedure to implement the rightward delivery step.

\begin{figure}[t]
\begin{algorithm}[H]
    \caption{Planning algorithm for the rightward delivery task} \label{alg:drop-off}
    \begin{algorithmic}[1]
        \Require The target geometry $A^\mathrm{(tgt)}$.
        \Ensure The reconfiguration plan $p$ that converts the left-aligned geometry $A^\mathrm{(left)}$ to $A^\mathrm{(tgt)}$.
        \State $p \gets \emptyset$ \Comment{Initialize the plan}
        \For{$x = 1$ to $n-1$}
            \State $J \gets \{j \in X \mid x \leq j < n\}$
            \State $I \gets \{i \in X \mid A^\mathrm{(tgt)}_{i,x} = 0\}$ \Comment{Create empty sites for every row}
            \State Append the $\textrm{Right}(I,J)$ operation to $p$ \Comment{Right shift}
        \EndFor
        \State \Return $p$
    \end{algorithmic}
\end{algorithm}
\end{figure}

\Cref{alg:drop-off} presents the planning procedure for the rightward delivery step, in which the left-aligned atoms in each column are appropriately selected and simultaneously shifted rightward to achieve the target geometry.
This algorithm follows the same principle as the leftward alignment algorithm, namely that all empty sites in a given column are created simultaneously using a single operation.
To implement this idea, the planning algorithm inspects column $x$ from left to right (line~2).
For each column $x$, it identifies the set of rows $I$ that should contain an empty site (line~4).
For these rows $I$, a right-shift operation is appended, which creates empty sites in column $x$ by shifting all atoms located in columns $x, x+1, \dots, n-1$ to the right.
In contrast to the leftward alignment step, the number of atoms transported simultaneously decreases as the inspection progresses from left to right.

Similar to the leftward alignment algorithm, the planning time of \Cref{alg:drop-off} is $\bigO(n^2) = \bigO(N)$.
Moreover, \Cref{alg:drop-off} produces a plan comprising $n-1$ operations, yielding a reconfiguration time of $(n-1)(t_1+t_2)$.

\begin{lemma}[Alignment time]\label{lemma:packing_algorithm}
    For any configuration $A^\mathrm{(int)} \in S$, alignment in any of the four directions (left, right, up, or down) can be achieved with a feasible plan $p \in P$ whose reconfiguration time is
    \begin{equation}
        F(p, t_1, t_2) = (n - 1) (t_1 + t_2).
    \end{equation}
    This plan $p$ consists of $(n-1)$ tweezer operations.
\end{lemma}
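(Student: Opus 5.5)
The plan is to prove the statement for leftward alignment by showing that \Cref{alg:left-align} is correct, and then to get the other three directions by symmetry. The operation count is immediate. The loop at line~2 runs for $x = n-1, \dots, 1$ and appends exactly one operation per iteration, so $|p| = n-1$ and $F(p,t_1,t_2) = (n-1)(t_1+t_2)$ by \Cref{eq:object}. If some iteration has $I = \emptyset$, it is simply a trivial operation and still counts. So the substance lies in two claims: every appended $\mathrm{Left}(I,J)$ is feasible (no site holds two atoms afterwards), and the final geometry satisfies \Cref{eq:left-aligned-property} and \Cref{constraint:left-align}.

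\textbf{Step 1 (column $x$ is untouched when it is read).} I would first note that the operation appended at iteration $x'$ moves atoms only from columns $x'+1,\dots,n$ into columns $x',\dots,n-1$. Hence before iteration $x$, every earlier iteration $x' > x$ has affected only columns $\geq x' > x$. So column $x$ still equals column $x$ of $A^{\mathrm{(int)}}$, and the test $A^{\mathrm{(int)}}_{i,x}=0$ on line~4 correctly detects the sites that are currently empty. This step is what justifies reading the initial geometry inside the loop. I expect it to be the main subtle point, and it is the step I would state carefully.

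\textbf{Step 2 (invariant).} By downward induction on $x$, after iteration $x$ each row $i$ restricted to columns $x,\dots,n$ is left-aligned. It also contains the same number of atoms as $A^{\mathrm{(int)}}$ restricted to columns $x,\dots,n$. The base case is the state before the first iteration, where the single column $n$ is trivially left-aligned.

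In the inductive step there are two cases. For a row $i\notin I$, nothing moves, and an atom at column $x$ followed by a left-aligned block is again left-aligned. For a row $i\in I$, site $(i,x)$ is empty by Step~1, and the left-aligned block in columns $x+1,\dots,n$ shifts rigidly by one. The shifted block starts at column $x$, and column $n$ is vacated, so the row segment is left-aligned with its atom count unchanged.

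\textbf{Step 3 (feasibility and conclusion).} Feasibility of each operation follows in the same way. In every moved row the destination sites are column $x$, which is empty, together with columns $x+1,\dots,n-1$, which are vacated by the very atoms being moved. No atom leaves the lattice, since every destination column is at least $x \geq 1$, and rows outside $I$ are untouched. Taking $x=1$ in the invariant yields $A^{\mathrm{(left)}}$.

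For right, up and down alignment, I would conjugate by the reflection $j\mapsto n+1-j$ or by a $90^\circ$ rotation of the matrix. These maps send $\mathrm{Left}$ operations to $\mathrm{Right}$, $\mathrm{Up}$ or $\mathrm{Down}$ operations with relabelled $(I,J)$, and they preserve both feasibility and the number of operations.
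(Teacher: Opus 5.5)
Your proposal is correct and follows essentially the paper's argument: one operation per iteration of \Cref{alg:left-align} gives the count, left-alignment comes from the same downward induction on the partially left-aligned invariant with the same two-case split per row (\Cref{lemma:left-align} in the appendix), and the other three directions follow by reflection or rotation. Your Step~1 (column $x$ is still untouched when line~4 reads $A^{\mathrm{(int)}}$), the row-count invariant, and the explicit collision check are all used only implicitly in the paper's induction, so they tighten the argument without changing it.
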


\paragraph*{Peephole optimization.}
Redundant shift operations that do not move any atoms can be omitted to achieve more efficient reconfiguration.
Such redundancy can arise in certain rows during the rightward delivery step.
More specifically, after line~3 of \Cref{alg:drop-off}, a column $j \in X$ can be removed from the set $J$ if all sites $(i,j)$ on $j$ (1) must retain their currently occupied atom or (2) can skip the right-shift operation because the atom delivery for row $i$ has already been completed:
\begin{equation}
    \forall i \in X: A^\mathrm{(tgt)}_{i,j} = 1 ~\lor~ \sum^{j}_{j^\prime=1}A^\mathrm{(tgt)}_{i,j^\prime} = r^\mathrm{(tgt)}_i. \label{eq:peephole}
\end{equation}
Both cases are permitted to maintain the current state of column $j$. \Cref{eq:peephole} can be verified in $\bigO(n)$ time for each of the $(n-1)$ operations, resulting in a total of $\bigO(n^2)=\bigO(N)$ time.
Thus, the additional overhead for this row-wise optimization is asymptotically the same as the time complexity of \Cref{alg:drop-off}.
In other words, inter-row optimization is avoided for row-wise operations because it would require $\bigO(n^3)$ time and increase the overall planning algorithm complexity.

\subsection{Decomposer}

The decomposer breaks any reconfiguration problem into at most three 1D shuttling tasks by attempting three strategies in sequence.
The first two strategies may fail to produce a valid solution for some problems because they aim for greater efficiency in specific cases.
If these strategies fail, the final strategy is applied, which guarantees a valid solution for any problem.
\begin{enumerate}
    \item \textbf{Grid formation strategy.} Applicable only to the grid formation problem, this strategy provides the most efficient solution among the three. The problem is decomposed into two 1D shuttling tasks, allowing the peephole optimization to operate efficiently.
    \item \textbf{Two-step strategy.} The second strategy attempts to decompose the problem into two 1D shuttling tasks.
    As an extension of the previous PSC/Tetris algorithm~\cite{cite:multiple_tweezer_kagome,cite:multiple_tweezer_grid}, this strategy is moderately efficient but has a small risk of failure.
    \item \textbf{Three-step strategy.} The final strategy decomposes the problem into three 1D shuttling tasks. We prove that this three-step strategy always produces a valid plan for any problem.
\end{enumerate}

The strategies described above enable the decomposer to improve both efficiency and reliability for atom reconfiguration problems.
Because a 1D shuttling task requires at most $(2n-2)$ operations, the resulting reconfiguration plan can be executed with an average of $(4n-4)$ operations and a maximum of $(6n-6)$ operations in the worst case.

\begin{figure*}[t]
  \centering
  \includegraphics[keepaspectratio,width=0.97\textwidth]{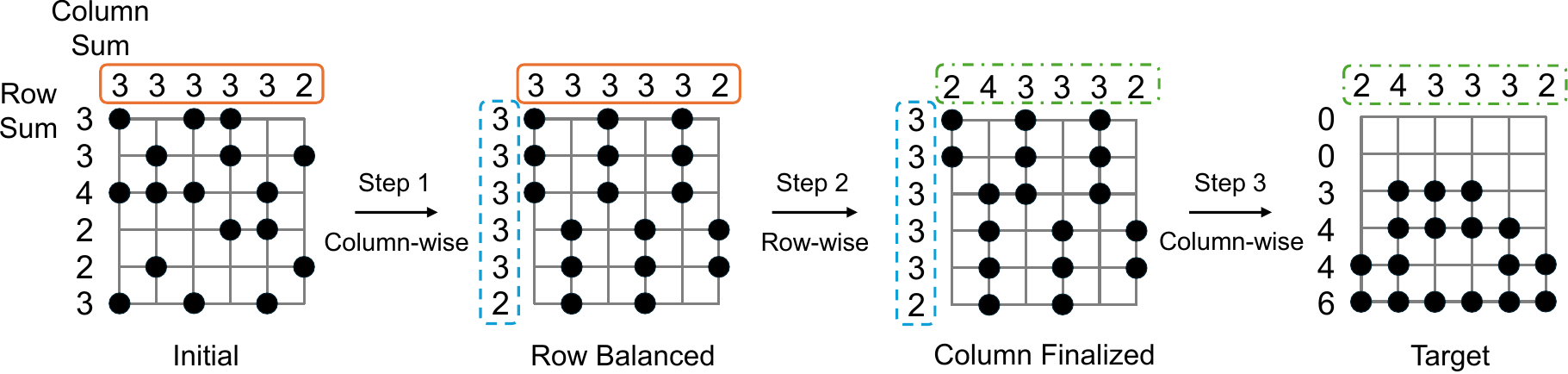}
  \caption{Overview of the three-step strategy. The initial geometry is transformed into the target geometry via two intermediate geometries: the row-balanced geometry and the column-finalized geometry. The transformation proceeds in three steps: (1) a column-wise shuttling task, (2) a row-wise shuttling task, and (3) a final column-wise shuttling task. Colored boxes indicate that either the row sums $R$ or the column sums $C$ remain unchanged between consecutive steps.}
  \label{fig:three-step}
\end{figure*}

\subsubsection{Three-step Strategy}
Given the initial and target atom geometries, $A^\mathrm{(int)}$ and $A^\mathrm{(tgt)}$, the three-step strategy decomposes the reconfiguration problem into the following three 1D shuttling tasks (see \Cref{fig:three-step}).
\begin{enumerate}
    \item \textbf{Row-balancing task.} 
    This task transforms the initial geometry $A^\mathrm{(int)}$ into a row-balanced geometry $A^\mathrm{(rbal)}$ by performing column-wise shuttling to equalize the number of atoms across rows.
    More formally, $A^\mathrm{(rbal)}$ satisfies the following condition:
\begin{equation} \label{eq:row_diff_one}
        \forall i',i'' \in X: |r^\mathrm{(rbal)}_{i'} - r^\mathrm{(rbal)}_{i''}| \leq 1.
\end{equation}
    \item \textbf{Column-finalizing task.}
    The second task transforms $A^\mathrm{(rbal)}$ into a column-finalized geometry $A^\mathrm{(cfin)}$ through row-wise shuttling.
    The geometry $A^\mathrm{(cfin)}$ has the same row sums $R^\mathrm{(cfin)}$ as $A^\mathrm{(rbal)}$ and the same column sums $C^\mathrm{(cfin)}$ as the target geometry $A^\mathrm{(tgt)}$.    
    The existence of such a geometry $A^\mathrm{(cfin)}$ is guaranteed by \Cref{theorem:gale_ryser}.
    \item \textbf{Row-finalizing task.}
    The last task transforms $A^\mathrm{(cfin)}$ into the target geometry $ A^\mathrm{(tgt)}$ by performing column-wise shuttling.
\end{enumerate}

We show that the three-step strategy always produces a valid atom geometry by decomposing any given problem into three 1D shuttling tasks.
This holds if $A^\mathrm{(rbal)}$ and $A^\mathrm{(cfin)}$ exist for any problem.
The existence of $A^\mathrm{(rbal)}$ is ensured by \Cref{alg:leveling}, which constructs $A^\mathrm{(rbal)}$ by distributing atoms across rows using a round-robin approach.
In contrast, the existence of $A^\mathrm{(cfin)}$ is guaranteed by the following \Cref{lemma:strongest}.
\begin{lemma} \label{lemma:strongest}
    Let $A$ be an atom geometry with row sums $R=(r_1, r_2, \dots, r_n)$ satisfying \Cref{eq:row_diff_one}.
    For any sequence of non-negative integers $C^\prime=(c^\prime_1, c^\prime_2, \dots, c^\prime_n)$ that satisfies
    \begin{equation}
        \sum^n_{j=1} c^\prime_{j} = \sum^n_{i=1} r_i,
    \end{equation}
    where $c_j^\prime \in X~(1 \leq j \leq n)$, there exists an atom geometry $A^\prime$ whose row sums are $R$ and column sums are $C^\prime$.
\end{lemma}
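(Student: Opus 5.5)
We will verify the Gale--Ryser condition of \Cref{theorem:gale_ryser} for the pair $(R, C^\prime)$. The roles must be set up correctly: $R$ plays the role of row sums, $C^\prime$ plays the role of column sums, and the theorem requires checking
\begin{equation*}
\forall k \in X: \sum_{j=1}^{k} c^{\prime\prime}_j \leq \sum_{i=1}^{n} \min(r_i, k),
\end{equation*}
where $C^{\prime\prime}$ is $C^\prime$ sorted in non-increasing order. We must also note that each $c^\prime_j \le n$. This holds because $c^\prime_j \in X$, or, allowing zeros, because $0 \le c^\prime_j \le n$.

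\textbf{Parametrize the balanced rows.} Let $M = \sum_i r_i$. By \Cref{eq:row_diff_one}, every $r_i \in \{q, q+1\}$ with $q = \lfloor M/n \rfloor$, and exactly $s = M - nq$ rows have value $q+1$. We then compute $g(k) := \sum_{i} \min(r_i,k)$ in two cases.
\begin{itemize}
\item For $k \le q$: every row contributes $k$, so $g(k) = nk$.
\item For $k \ge q+1$: every row contributes $r_i$, so $g(k) = M$.
\end{itemize}

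\textbf{Compare with the partial sums of $C^{\prime\prime}$.} The bound for $k \ge q+1$ is immediate, since $\sum_{j\le k} c^{\prime\prime}_j \le \sum_j c^\prime_j = M$. For $k \le q$, we use $c^{\prime\prime}_j \le n$, which gives $\sum_{j \le k} c^{\prime\prime}_j \le nk = g(k)$. Hence the inequality holds for every $k \in X$, and \Cref{theorem:gale_ryser} yields a binary matrix $A^\prime$ with row sums $R$ and column sums $C^\prime$. If desired, \Cref{alg:intermediate} then constructs it explicitly.

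\textbf{Expected obstacle.} The only delicate point is the exact form of $g(k)$: one must use the fact that the rows differ by at most one, so that there is no intermediate regime of $k$ in which some rows are truncated and others are not. A sanity check is the edge case $q = 0$, where the first regime is empty and the $k \ge 1$ case already covers everything. The column bound $c^\prime_j \le n$ is also essential, and it is exactly where the hypothesis $c^\prime_j \in X$ enters.
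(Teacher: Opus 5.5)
Your proof is correct and takes essentially the same approach as the paper. The paper verifies the Gale--Ryser inequality by noting that row balance gives $\sum_i \min(r_i,k)=\min(kn,M)$ and that any $k$ column sums are at most $\min(kn,M)$; your split into $k\le q$ and $k\ge q+1$ simply evaluates that same minimum explicitly.
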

\begin{proof}
The details are presented in \Cref{appendix:three-step-strategy}.
\end{proof}

\begin{figure}[t]
\begin{algorithm}[H]
    \caption{Row balancing step} \label{alg:leveling}
    \begin{algorithmic}[1]
        \Require The initial geometry $A^\mathrm{(int)}$.
        \Ensure The row balanced geometry $A^\mathrm{(rbal)}$.
        \State $A^\mathrm{(rbal)} \gets$ zero matrix of size $n \times n$ \Comment{Initialize the geometry}
        \State $t \gets 0$ \Comment{Initialize the atom counter}
        \For{$j = 1$ to $n$}
            \For{$i = 1$ to $n$}
                \If{$A^\textrm{(int)}_{i,j} = 1$}
                    \State $A^\mathrm{(rbal)}_{t+1,j} \gets 1$
                    \State $t \gets (t+1) \mod n$ \Comment{Round-robin distribution}
                \EndIf
            \EndFor
        \EndFor
        \State \Return $A^\mathrm{(rbal)}$
    \end{algorithmic}
\end{algorithm}
\end{figure}

\subsubsection{Two-step Strategy}

The two-step strategy attempts to bypass the row-balancing task used in the three-step strategy, thereby reducing the number of 1D shuttling tasks from three to two.
While we expect the two-step strategy to succeed for most reconfiguration problems, it can occasionally fail to produce a valid plan because the given initial geometry does not always satisfy \Cref{eq:row_diff_one}.

The idea of the two-step strategy is to transform the initial geometry $A^\mathrm{(int)}$ into either a column-finalized geometry $A^\mathrm{(cfin)}$ or a row-finalized geometry $A^\mathrm{(rfin)}$ that can subsequently be converted into the target geometry $A^\mathrm{(tgt)}$.
If a column-finalized geometry is obtained, the two-step strategy performs a row-finalizing task, which is similar to the approach used in the PSC/Tetris algorithms.
If a row-finalized geometry is obtained instead, the two-step strategy performs a column-finalizing task, which is not included in the PSC/Tetris algorithms.
If neither geometry can be found, the algorithm switches to the three-step strategy to ensure a valid reconfiguration plan.
This hybrid approach exploits the higher efficiency of the two-step strategy whenever possible, while relying on the three-step strategy to guarantee a solution for all problem instances. 

The search algorithm for finding $A^\mathrm{(cfin)}$ or $A^\mathrm{(rfin)}$ relies on verifying the row sums $R$ and column sums $C$.
Specifically, given the initial geometry $A^\mathrm{(int)}$ and the target geometry $A^\mathrm{(tgt)}$, we try to find $A^\mathrm{(cfin)}$ such that
\begin{equation}
    R^\mathrm{(cfin)} = R^\mathrm{(int)} ~\wedge~ C^\mathrm{(cfin)} = C^\mathrm{(tgt)}.
\end{equation}
Similarly, we try to find $A^\mathrm{(rfin)}$ such that
\begin{equation}
    R^\mathrm{(rfin)} = R^\mathrm{(tgt)} ~\wedge~ C^\mathrm{(rfin)} = C^\mathrm{(int)}.
\end{equation}
In both cases, the existence of a valid geometry can be verified using \Cref{theorem:gale_ryser}.
For example, consider the following initial and target geometries, which satisfy \Cref{eq:existence_iff}:
\begin{align}
    A^\mathrm{(int)} &= \begin{pmatrix}
        1& 1 & 1 & 1 \\
        0& 0 & 0 & 0 \\
        0& 0 & 0 & 0 \\
        0& 0 & 0 & 0 \\
    \end{pmatrix},\\
    A^\mathrm{(tgt)} &= \begin{pmatrix}
        1& 0 & 0 & 0 \\
        1& 0 & 0 & 0 \\
        1& 0 & 0 & 0 \\
        1& 0 & 0 & 0 \\
    \end{pmatrix}.
\end{align}
Clearly, $A^\mathrm{(cfin)}$ does not exist in this example because both $R^\mathrm{(int)}$ and $C^\mathrm{(tgt)}$ are $(4,0,0,0)$.
In contrast, $C^\mathrm{(int)}=R^\mathrm{(tgt)}=(1,1,1,1)$ gives $R^\mathrm{(rfin)}=C^\mathrm{(rfin)}=(1,1,1,1)$, which satisfies \Cref{eq:existence_iff} and allows \Cref{alg:intermediate} to construct $A^\mathrm{(rfin)}$ as follows.
\begin{equation}
    A^\mathrm{(rfin)} = \begin{pmatrix}
        1& 0 & 0 & 0 \\
        0& 1 & 0 & 0 \\
        0& 0 & 1 & 0 \\
        0& 0 & 0 & 1 \\
    \end{pmatrix}.
\end{equation}

No valid geometry exists if \Cref{eq:existence_iff} is not satisfied.
For example, consider the following initial and target geometries.
\begin{align}
    A^\mathrm{(int)} &= \begin{pmatrix}
        1& 1 & 1 & 1 \\
        1& 1 & 1 & 0 \\
        1& 0 & 0 & 0 \\
        1& 0 & 0 & 0 \\
    \end{pmatrix},\\
    A^\mathrm{(tgt)} &= \begin{pmatrix}
        1 & 1 & 1 & 0 \\
        1 & 1 & 1 & 0 \\
        1 & 1 & 1 & 0  \\
        0 & 0 & 0 & 0 \\
    \end{pmatrix}.
\end{align}
For $A^\mathrm{(cfin)}$, we require $R^\mathrm{(cfin)}=(4,3,1,1)$ and $C^\mathrm{(cfin)}=(3,3,3,0)$, which do not satisfy \Cref{eq:existence_iff}.
Similarly, for $A^\mathrm{(rfin)}$, we require $R^\mathrm{(rfin)}=(3,3,3,0)$ and $C^\mathrm{(rfin)}=(4,2,2,1)$, which also fail to satisfy \Cref{eq:existence_iff}.

\subsubsection{Grid Formation Strategy}
The grid formation strategy, which aims to create an $L \times L$ atom array, attempts to decompose the original reconfiguration problem into two 1D shuttling tasks in a way that allows the peephole optimization technique to further reduce the number of required tweezer operations, with the key idea being to gather atoms toward the top-left corner region of the atom array.

\begin{figure*}[t]
  \centering
  \includegraphics[keepaspectratio,width=0.80\textwidth]{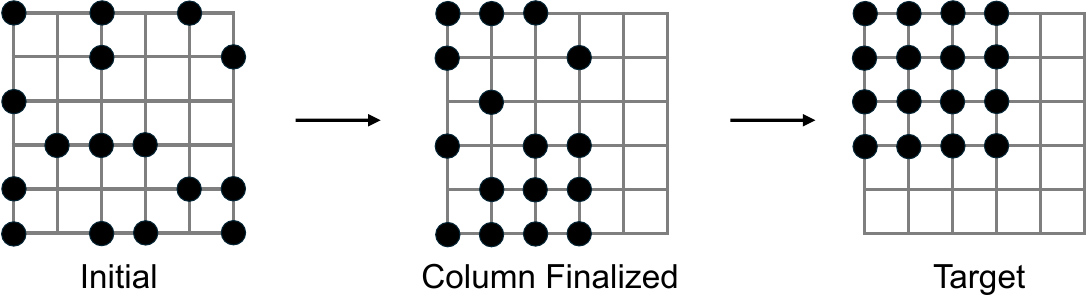}
  \caption{Illustration of the grid formation procedure.}
  \label{fig:staircase}
\end{figure*}

Given the initial geometry $A^\mathrm{(int)}$, the grid formation strategy decomposes the grid formation problem into two sequential tasks, as shown in \Cref{fig:staircase}.
\begin{enumerate}
    \item \textbf{Column-finalizing task.} This task transforms the initial geometry $A^\mathrm{(int)}$ into a column-finalized geometry $A^\mathrm{(cfin)}$ by performing row-wise shuttling to gather $L^2$ atoms into the leftmost $n \times L$ sites.
    After applying the peephole optimization, this task requires $(n-1) + (L-1)$ operations.
    \item \textbf{Row-finalizing task.} This task transforms $A^\mathrm{(cfin)}$ into the target geometry $A^\mathrm{(tgt)}$ by performing column-wise shuttling to gather $L^2$ atoms to the top-left $L \times L$ sites.
    After applying the peephole optimization, this task requires $(n-1)$ operations.
\end{enumerate}
As a result, the grid formation strategy produces a reconfiguration plan that contains $2(n-1) + (L - 1)$ tweezer operations, which is approximately less than half the number of operations required by the three-step reconfiguration strategy.

\begin{figure}[t]
\begin{algorithm}[H]
    \caption{Column finalizing step for grid formation} \label{alg:grid_int}
    \begin{algorithmic}[1]
        \Require The initial geometry $A^\textrm{(int)}$.
        \Ensure The column finalized geometry $A^\textrm{(cfin)}$.
        \State $A^\textrm{(cfin)} \gets$ zero matrix of size $n \times n$ \Comment{Initialize the geometry}
        \State $t \gets 0$ \Comment{Initialize the atom counter}
        \For{$i = 1$ to $n$}
            \State $l \gets \mathrm{min}(\sum^{n}_{j=1} A^\textrm{(int)}_{i,j}, L)$ \Comment{At most $L$ atoms per row}
            \For{$j = 1$ to $l$}
                \State $A^\textrm{(cfin)}_{i, t+1} \gets 1$
                \State $t \gets (t+1) \mod L$ \Comment{Round-robin distribution to $L$ columns}
            \EndFor
            \For{$j = l + 1$ to $\sum^{n}_{k=1} A^\textrm{(int)}_{i,k}$} \Comment{Remaining atoms are placed outside the $L$ columns}
                \State $A^\textrm{(cfin)}_{i,j} \gets 1$
            \EndFor
        \EndFor
        \State \Return $A^\textrm{(cfin)}$
    \end{algorithmic}
\end{algorithm}
\end{figure}

The column-finalized geometry $A^\mathrm{(cfin)}$ can be constructed using a round-robin procedure, as described in \Cref{alg:grid_int}.
For each row $i$, the algorithm first computes the number $l$ of atoms that can be allocated to the $L \times L$ atom array (line~4).
It then distributes these $l$ atoms across the $L$ columns in a round-robin manner (lines~5--8), while any remaining atoms are placed outside the $n \times L$ region (lines~9--11).
Given the column-finalized geometry $A^\mathrm{(cfin)}$, the target geometry $A^\mathrm{(tgt)}$ can be easily obtained by performing column-wise packing because the atoms in $A^\mathrm{(cfin)}$ are evenly distributed among the $L$ columns.

During the column-finalizing task, if a row initially contains more than $L$ atoms, the strategy discards the excess atoms even though they could potentially be used to compensate for rows with fewer atoms.
As a result, the grid formation strategy fails to construct the target $L \times L$ grid if the number of retained atoms falls below $L^2$.
More formally, the grid formation strategy produces a valid reconfiguration plan if the initial geometry $A^\mathrm{(int)}$ satisfies the following condition:
 \begin{equation}
    \sum^{n}_{i=1} \mathrm{min}\left(\sum^{n}_{j=1}A^\mathrm{(int)}_{i,j}, L\right) \geq L^2.
\end{equation}
Otherwise, the algorithm must switch to the three-step reconfiguration strategy to produce a valid plan.

\section{Related Work}

\Cref{tb:comparison} compares the proposed algorithm with previous state-of-the-art (SOTA) algorithms.
Many existing planning algorithms assume the use of a single mobile tweezer that transports one atom at a time; for example, the Hungarian algorithm~\cite{cite:Hungarian_original,cite:Hungarian} computes an optimal plan by formulating the atom reconfiguration problem as a graph matching problem.
Similar optimization-based approaches have been proposed previously~\cite{PhysRevA.102.063107,cite:hardware_AOD_assembler}.
These single-tweezer algorithms were later enhanced by multitweezer algorithms~\cite{cite:first_experiment_on_AOD,cite:multiple_tweezer_kagome}, which allow multiple atoms to move simultaneously.
However, these earlier algorithms require $N$ control devices to move $N$ atoms, limiting scalability owing to hardware costs that increase linearly with the number of atoms.

This linear cost has been mitigated by the development of a 2D AOD tweezer setup~\cite{cite:crossed_AOD}, in which AODs are arranged in a crossed configuration to control multiple mobile tweezers in parallel.
However, this approach constrains the mobile tweezers to form a 2D lattice pattern, thereby eliminating the ability to independently control the positions of individual tweezers.
As a result, planning algorithms must generate carefully designed reconfiguration plans to fully exploit the available parallel transportation capability.

\renewcommand{\arraystretch}{1.2}
\begin{table*}[t]
  \centering
  \setlength{\tabcolsep}{.2em}
  \caption{Comparison of atom reconfiguration algorithms.}
  \footnotesize
  \begin{tabular}{@{}lccccc@{}}
    \hline
    Item & Proposed & PSC~\cite{cite:multiple_tweezer_kagome}/Tetris~\cite{cite:multiple_tweezer_grid} & Hungarian~\cite{cite:Hungarian,cite:Hungarian_original} & Zhu \cite{Zhu2022-ga} & Constantinides \cite{cite:optimal_routing}\\ \hline
    Target array geometry  & Arbitrary & Arbitrary & Arbitrary & Grid & Arbitrary \\
    Distinguish atoms & No & No & No & No & Yes \\
    Number of AODs & 2 & 2 & 2 & 2 & 2\\
    Selective transfer required & No & No & No & Yes & Yes\\
    Total transportation cost & $\bigO (\sqrt N)$ & $\bigO(N)$ & $\bigO (N)$ & $\bigO(N \log \sqrt N)$ & Unknown \\
    Number of operations  & $\bigO (\sqrt N)$  & $\bigO (\sqrt N)$ & $\bigO(N)$ & $\bigO (\sqrt N)$ & $\bigO(\sqrt N \log \sqrt N)$ \\
    Planning time & $\bigO (N \log \sqrt N)$ & $\bigO(N)$ & $\bigO(N^4)$ &$\bigO(N)$ &  Unknown \\
    \hline
  \end{tabular}
  \label{tb:comparison}
\end{table*}
\renewcommand{\arraystretch}{1.0}

The PSC algorithm~\cite{cite:multiple_tweezer_kagome} is a well-known algorithm that leverages the capabilities of 2D AOD tweezers.
The Tetris algorithm, which is identical to the PSC algorithm, was independently proposed by Wang et al.~\cite{cite:multiple_tweezer_grid}.
Both algorithms reconfigure atoms into arbitrary target geometries through two steps:
(1) row-by-row rearrangement to construct Tetrimino blocks, and (2) column-by-column compression to eliminate the Tetrimino blocks and obtain the target atom geometry.
This two-step strategy, which transports $\bigO(\sqrt N)$ atoms at a time, reconfigures $N$ atoms in $\bigO(t_1 \sqrt N + t_2 N)$ time under the linear cost model.
The proposed algorithm can be viewed as a refined variant of the PSC algorithm because both algorithms move multiple atoms using row-wise and column-wise operations.
However, the proposed algorithm differs from the PSC algorithm in that it (1) achieves an $\bigO(\sqrt N)$ total transportation cost for scalable neutral-atom systems, (2) guarantees a valid plan for arbitrary target geometries, and (3) incorporates peephole optimization for grid target geometries.

Exploiting transportation parallelism becomes easier when the underlying hardware supports selective transfer, in which a subset of atoms can be chosen to remain stationary while others are transported.
Zhu et al.~\cite{Zhu2022-ga} proposed an atom reconfiguration algorithm that exploits the full capability of 2D AOD devices through selective transfer.
However, its worst-case time complexity for atom reconfiguration is $\bigO(t_1 \sqrt N + t_2 N)$, which is the same as that of the PSC algorithm.
Constantinides et al.~\cite{cite:optimal_routing} further leverage selective transfer to achieve an atom reconfiguration time of $\bigO((t_1 \sqrt N + t_2 N) \log \sqrt N)$ for more constrained problems in which each atom has a specific destination site.
Such strict destination constraints are not considered in our target problem setting, where atoms are interchangeable, and any atom may occupy any site as long as the final geometry matches the target geometry.
In contrast, the proposed algorithm focuses on constructing atom arrays without relying on selective transfer, thereby easing hardware requirements and facilitating practical hardware development.

\section{Evaluation}
\label{sec:evaluation}

\begin{table}[t]
 \caption{Environment for numerical experiments.}
 \label{table:env-numerical}
 \centering
  \begin{tabular}{ll}
   \hline
   Item & Specification\\
   \hline
   CPU & Intel Core i5-12400F \\
   RAM & 16GiB \\
   OS & Ubuntu 24.04.3 LTS \\
   g++ version & 13.3.0 \\
   Compile options & \verb|-O2 --std=c++17| \\
   \hline
  \end{tabular}
\end{table}

\begin{figure}[t]
  \centering
  \begin{subfigure}[t]{0.46\textwidth}
    \centering
    \begin{tabular}{l}
      \subref{subfig:grid_movement_cost} \\
      \includegraphics[keepaspectratio, width=\linewidth]{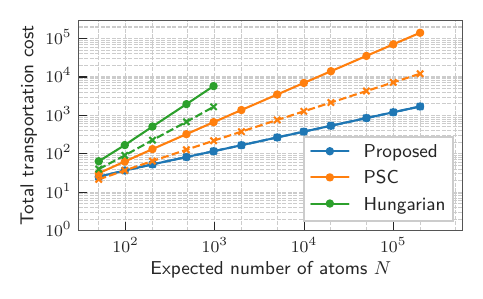}
    \end{tabular}
    \phantomsubcaption\label{subfig:grid_movement_cost}
  \end{subfigure}
    \begin{subfigure}[t]{0.46\textwidth}
    \centering
    \begin{tabular}{l}
      \subref{subfig:arbitrary_movement_cost} \\
      \includegraphics[keepaspectratio, width=\linewidth]{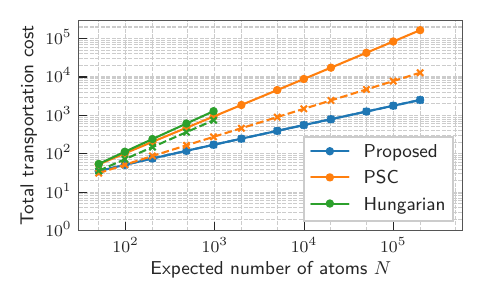}
    \end{tabular}
    \phantomsubcaption\label{subfig:arbitrary_movement_cost}
  \end{subfigure}
  \begin{subfigure}[t]{0.46\textwidth}
    \centering
    \begin{tabular}{l}
       \subref{subfig:grid_op_count} \\
       \includegraphics[keepaspectratio, width=\linewidth]{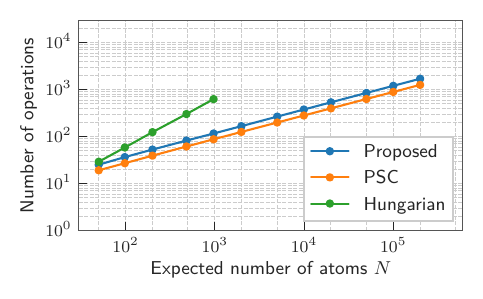}
    \end{tabular}
    \phantomsubcaption\label{subfig:grid_op_count}
  \end{subfigure}
    \begin{subfigure}[t]{0.46\textwidth}
    \centering
    \begin{tabular}{l}
       \subref{subfig:arbitrary_op_count} \\
       \includegraphics[keepaspectratio, width=\linewidth]{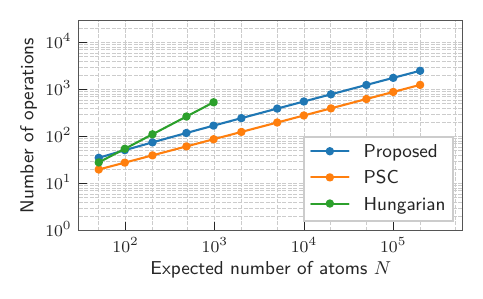}
    \end{tabular}
    \phantomsubcaption\label{subfig:arbitrary_op_count}
  \end{subfigure}
  \begin{subfigure}[t]{0.46\textwidth}
    \centering
    \begin{tabular}{l}
      \subref{subfig:grid_reconfiguration_time}\\
      \includegraphics[keepaspectratio, width=\linewidth]{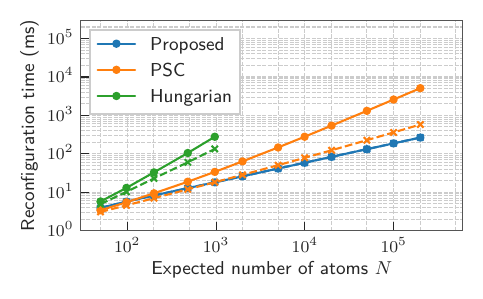}
    \end{tabular} \phantomsubcaption\label{subfig:grid_reconfiguration_time}
  \end{subfigure}
    \begin{subfigure}[t]{0.46\textwidth}
    \centering
    \begin{tabular}{l}
      \subref{subfig:arbitrary_reconfiguration_time}\\
      \includegraphics[keepaspectratio, width=\linewidth]{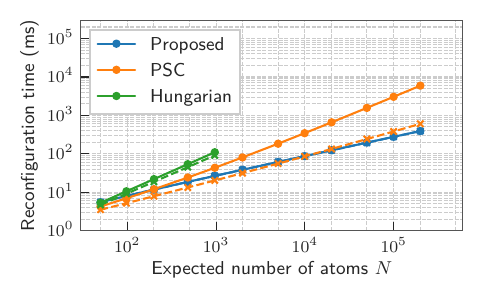}
    \end{tabular}
    \phantomsubcaption\label{subfig:arbitrary_reconfiguration_time}
  \end{subfigure}
  \caption{Evaluation results for reconfiguration plans.
  Total transportation cost for \subref{subfig:grid_movement_cost} grid and \subref{subfig:arbitrary_movement_cost} arbitrary target geometries.
  Number of operations for \subref{subfig:grid_op_count} grid and \subref{subfig:arbitrary_op_count} arbitrary target geometries.
  Estimated reconfiguration time for \subref{subfig:grid_reconfiguration_time} grid and \subref{subfig:arbitrary_reconfiguration_time} arbitrary target geometries.
  Solid and dashed lines represent results for the linear and sqrt cost models, respectively.} \label{fig:grid}
\end{figure}

We evaluated the efficiency and scalability of the proposed algorithm through numerical experiments, which
empirically confirm the theoretical $\mathcal{O}(\sqrt{N})$ time complexity and demonstrate the benefits of exploiting the inherent parallelism of tweezer arrays combined with a lattice pattern.
The evaluation provides a comprehensive analysis consisting of two main parts.
\begin{itemize}
    \item \textbf{Evaluation for reconfiguration plans.} We compared the reconfiguration plans generated by the proposed algorithm with those produced by SOTA algorithms~\cite{cite:multiple_tweezer_kagome,cite:multiple_tweezer_grid,cite:Hungarian}.
    This evaluation demonstrates that the proposed algorithm effectively exploits hardware parallelism to achieve performance improvements over previous algorithms.
    We also performed a detailed comparison with the PSC/Tetris algorithm~\cite{cite:multiple_tweezer_kagome,cite:multiple_tweezer_grid}.
    \item \textbf{Evaluation for planning algorithms.} We compared the proposed and SOTA algorithms in terms of planning time and reliability.
\end{itemize}
As SOTA baselines, we considered two algorithms: (1) the PSC/Tetris algorithm~\cite{cite:multiple_tweezer_kagome,cite:multiple_tweezer_grid}, a heuristic algorithm designed for 2D AOD tweezer systems; and (2) the Hungarian algorithm~\cite{cite:Hungarian}, a graph-based approach that minimizes total travel distance through optimal atom-to-site assignment.
Note that all previous algorithms were evaluated using the complex cost model described in \Cref{appendix:difficult_operations}, or their respective original models.
The Hungarian algorithm provides a theoretical lower bound on the total travel distance achievable when using a single mobile tweezer system.

The numerical experiments were performed for two problem types (grid formation and arbitrary formation) using the setup described in \Cref{table:env-numerical}.
The reported results are averages over 10,000 randomly generated problem instances.

\subsection{Evaluation of Reconfiguration Plan}

\Cref{fig:grid} shows the evaluation results for the reconfiguration plans generated by the SOTA algorithms.
All experiments were performed with a time limit of 10 h per algorithm to obtain reconfiguration plans.
Under this time constraint, we were unable to obtain large-scale results ($N > 10^3$) for the Hungarian algorithm.
The proposed algorithm yields identical results under both the linear and sqrt cost models because each operation moves atoms with a unit travel distance of 1: $1=\sqrt1$.

In \Cref{subfig:grid_movement_cost}, the proposed algorithm achieves the lowest total transportation cost for grid geometries when $N \geq 200$.
In particular, compared with the PSC algorithm, the proposed method reduces the total transportation cost to approximately $1/83$ and $1/7$ under the linear and sqrt cost models, respectively, when $N = 2 \times 10^5$.
The smaller reduction under the sqrt cost model arises because the PSC algorithm benefits from long-distance transportation, which allows atoms to be moved farther in a single operation.
Moreover, the experimental results are consistent with the theoretical analysis, showing that the total transportation cost of the proposed algorithm scales with $\sqrt N$ and exhibits excellent asymptotic behavior.
Indeed, the scaling exponent of the proposed algorithm is the lowest among all compared algorithms, regardless of the cost models used.

The performance trends for arbitrary geometries shown in \Cref{subfig:arbitrary_movement_cost} are similar to those observed for grid geometries, with the proposed algorithm achieving the lowest total transportation cost among the compared methods when $N \geq 200$.
In particular, compared with the PSC algorithm, the proposed algorithm reduces the total transportation cost to approximately $1/66$ and $1/5$ under the linear and sqrt cost models, respectively, when $N = 2 \times 10^5$.
Likewise, the Hungarian algorithm also achieves a reduction in total transportation cost for arbitrary geometries, primarily because it allows long-distance transportation of atoms.

With respect to the number of operations, the proposed algorithm achieves competitive performance compared to the PSC algorithm, as shown in \Cref{subfig:grid_op_count}, with both algorithms exhibiting an $\mathcal{O}(\sqrt{N})$ scaling trend for grid geometries, which is superior to that of the Hungarian algorithm.
The proposed algorithm incurs a modest increase in the number of operations, requiring approximately $32$\%--$35$\% more operations than the PSC algorithm.
For arbitrary geometries, \Cref{subfig:arbitrary_op_count} shows that the proposed algorithm requires approximately $77$\%--$100$\% more operations than the PSC algorithm, and the larger increase from grid to arbitrary geometries highlights the effect of the peephole optimization, which effectively reduces the number of operations for grid geometries but provides less benefit for arbitrary geometries.
Although the reconfiguration plans for arbitrary formation involve more operations than those for grid formation, the observed results are consistent with the theoretical analysis, which predicts $\bigO(n)$ scaling behavior.

\Cref{subfig:grid_reconfiguration_time} shows the estimated reconfiguration time for grid geometries using $t_1 = 120~\si{\micro\second}$ and $t_2 = 35~\si{\micro\second}$, which are the same parameter values used in reference \cite{cite:multiple_tweezer_kagome}.
The results show that the proposed algorithm achieves the shortest estimated reconfiguration time among the compared algorithms when $N \geq 2000$.
These findings indicate that for large-scale atom arrays, the advantage of reducing the total transportation cost outweighs the drawback of an increased number of operations.
Moreover, \Cref{subfig:grid_reconfiguration_time} shows that the performance gap between the proposed algorithm and the other methods becomes smaller when the linear cost model is replaced with the sqrt cost model because this replacement enhances the benefit of long-distance transportation, while the proposed algorithm assumes identical transportation costs under both models.
Overall, these results empirically confirm that the proposed algorithm achieves $\bigO(\sqrt{N})$ scaling for reconfiguration time, in agreement with the theoretical analysis.

\Cref{subfig:arbitrary_reconfiguration_time} presents the estimated reconfiguration time for arbitrary geometries using the same parameter settings as those for grid geometries.
Owing to the increased number of operations, the estimated reconfiguration time is longer than that for grid formation problems.
For example, the estimated reconfiguration time for arbitrary geometries is approximately 1.40--1.48$\times$ longer than that for grid geometries.
However, the proposed algorithm maintained its scalability advantage over comparable algorithms, even under the sqrt cost model.
The observed performance gap between grid and arbitrary geometries arises from differences in the atom density of the target geometry; grid geometries exhibit higher local density because atoms are compacted into a specific square region.
The proposed algorithm benefits from this higher density because the target grid geometry structurally resembles the left-aligned geometry, allowing a substantial portion of the rightward delivery step to be skipped through peephole optimization, which results in approximately a 33\% reduction in reconfiguration time. 

\begin{figure}[t]
  \centering
  \begin{subfigure}[t]{0.48\textwidth}
    \centering
    \begin{tabular}{l}
       \subref{subfig:atom_moved_parallelism}\\
       \includegraphics[keepaspectratio, width=\linewidth]{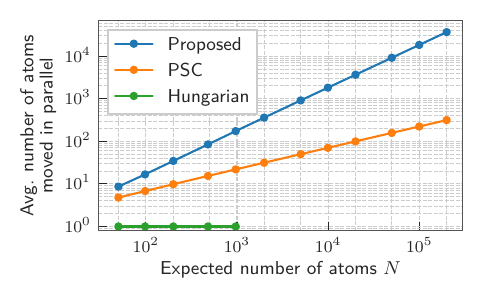}
    \end{tabular}
    \phantomsubcaption\label{subfig:atom_moved_parallelism}
  \end{subfigure}
  \begin{subfigure}[t]{0.48\textwidth}
    \centering
    \begin{tabular}{l}
      \subref{subfig:atom_distance}\\
      \includegraphics[keepaspectratio, width=\linewidth]{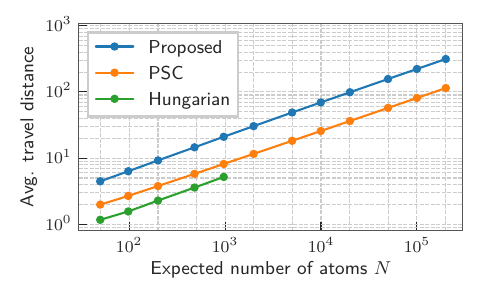}
    \end{tabular}
    \phantomsubcaption\label{subfig:atom_distance}
  \end{subfigure}
  \begin{subfigure}[t]{0.48\textwidth}
    \centering
    \begin{tabular}{l}
      \subref{subfig:grid_avg_moved_atoms}\\
      \includegraphics[keepaspectratio, width=\linewidth]{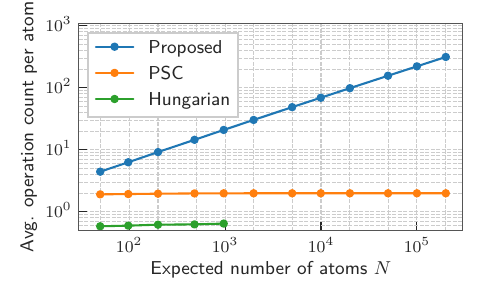}
    \end{tabular}
    \phantomsubcaption\label{subfig:grid_avg_moved_atoms}
  \end{subfigure}
  \caption{Analysis of grid formation plans.
  \subref{subfig:atom_moved_parallelism} Average number of atoms moved in parallel.
  \subref{subfig:atom_distance} Average travel distance per atom.
  \subref{subfig:grid_avg_moved_atoms} Average number of operations per atom.}\label{fig:grid_rowwise_reconf_time}
\end{figure}

\begin{figure}[t]
  \centering
  \begin{subfigure}[t]{0.48\textwidth}
    \centering
    \begin{tabular}{l}
       \subref{subfig:gather_rowwise_reconf_time} \\
       \includegraphics[keepaspectratio, width=\linewidth]{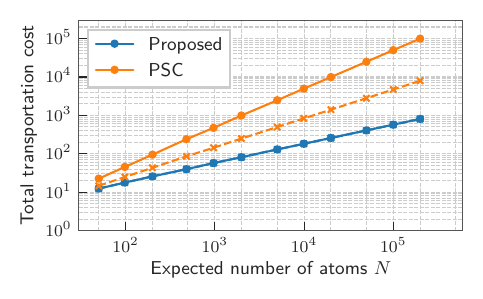}
    \end{tabular}
    \phantomsubcaption\label{subfig:gather_rowwise_reconf_time}
  \end{subfigure}
  \begin{subfigure}[t]{0.48\textwidth}
    \centering
    \begin{tabular}{l}
      \subref{subfig:random_rowwise_reconf_time}\\
      \includegraphics[keepaspectratio, width=\linewidth]{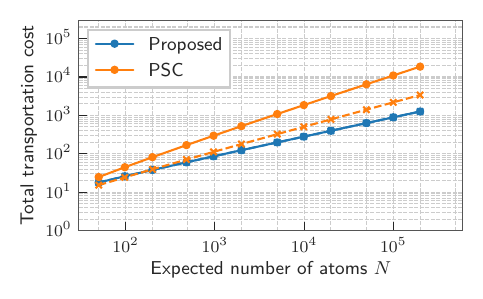}
    \end{tabular}
    \phantomsubcaption\label{subfig:random_rowwise_reconf_time}
  \end{subfigure}
  \caption{Total transportation cost for row-wise operations. \subref{subfig:gather_rowwise_reconf_time} Results for gathering tasks.
  \subref{subfig:random_rowwise_reconf_time} Results for randomizing tasks. Solid and dashed lines represent results for the linear and sqrt cost models, respectively.}  \label{fig:rowwise_reconf_time}
\end{figure}

\begin{figure}[t]
  \centering
  \begin{subfigure}[t]{0.48\textwidth}
    \centering
    \begin{tabular}{l}
       \subref{subfig:grid_planning_time} \\
       \includegraphics[keepaspectratio, width=\linewidth]{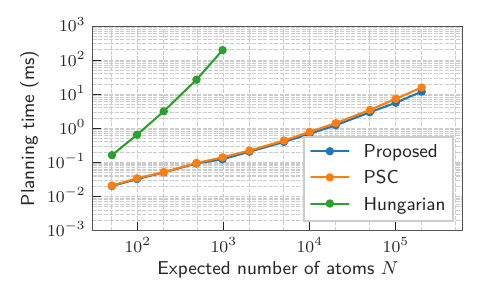}
    \end{tabular}
    \phantomsubcaption\label{subfig:grid_planning_time}
  \end{subfigure}
  \begin{subfigure}[t]{0.48\textwidth}
    \centering
    \begin{tabular}{l}
       \subref{subfig:arbitrary_planning_time} \\
       \includegraphics[keepaspectratio, width=\linewidth]{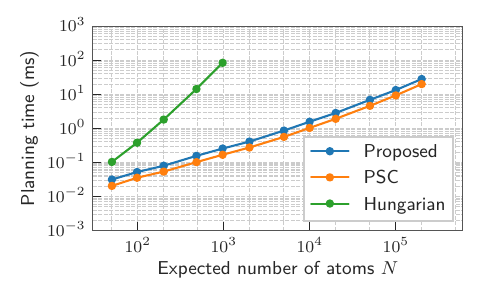}
    \end{tabular}
    \phantomsubcaption\label{subfig:arbitrary_planning_time}
  \end{subfigure}
  \begin{subfigure}[t]{0.48\textwidth}
    \centering
    \begin{tabular}{l}
      \subref{subfig:grid_success_rate}\\
      \includegraphics[keepaspectratio, width=\linewidth]{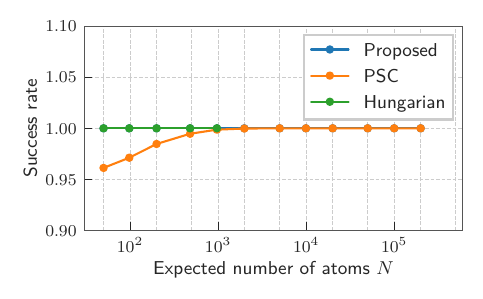}
    \end{tabular}
    \phantomsubcaption\label{subfig:grid_success_rate}
  \end{subfigure}
  \begin{subfigure}[t]{0.48\textwidth}
    \centering
    \begin{tabular}{l}
      \subref{subfig:arbitrary_success_rate}\\
      \includegraphics[keepaspectratio, width=\linewidth]{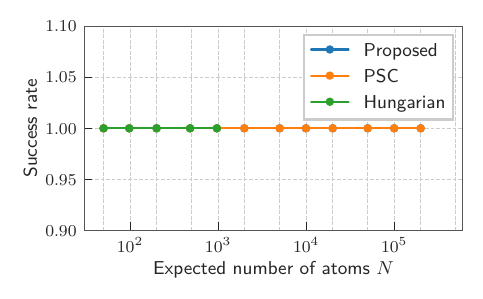}
    \end{tabular}
    \phantomsubcaption\label{subfig:arbitrary_success_rate}
  \end{subfigure}
  \caption{Planning time and success rate for the grid and arbitrary formation problems.
  Planning time for \subref{subfig:grid_planning_time} grid and \subref{subfig:arbitrary_planning_time} arbitrary target geometries.
  Success rate for \subref{subfig:grid_success_rate} grid and \subref{subfig:arbitrary_success_rate} arbitrary target geometries.}  \label{fig:arbitrary_plan}
\end{figure}

\begin{figure}[t]
  \centering
  \begin{subfigure}[t]{0.48\textwidth}
    \centering
    \begin{tabular}{l}
      \subref{subfig:grid_reconf_three_step_ratio}\\
      \includegraphics[keepaspectratio, width=\linewidth]{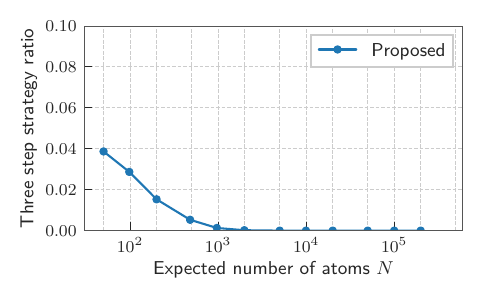}
    \end{tabular}
    \phantomsubcaption\label{subfig:grid_reconf_three_step_ratio}
  \end{subfigure}
  \begin{subfigure}[t]{0.48\textwidth}
    \centering
    \begin{tabular}{l}
       \subref{subfig:arbitrary_reconf_three_step_ratio} \\
       \includegraphics[keepaspectratio, width=\linewidth]{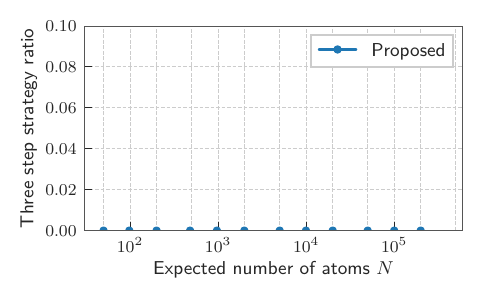}
    \end{tabular}
    \phantomsubcaption\label{subfig:arbitrary_reconf_three_step_ratio}
  \end{subfigure}
  \begin{subfigure}[t]{0.48\textwidth}
    \centering
    \begin{tabular}{l}
       \subref{subfig:grid_peephole_optimization} \\
       \includegraphics[keepaspectratio, width=\linewidth]{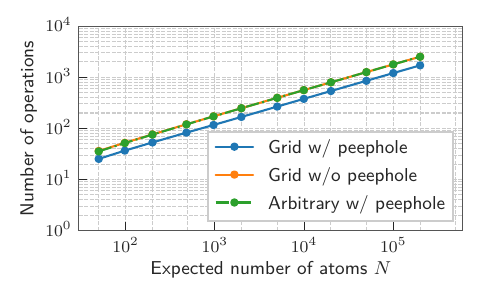}
    \end{tabular}
    \phantomsubcaption\label{subfig:grid_peephole_optimization}
  \end{subfigure}
  \caption{Impact of the three-step strategy and peephole optimization.
  Activation ratio of the three-step strategy for \subref{subfig:grid_reconf_three_step_ratio} grid and \subref{subfig:arbitrary_reconf_three_step_ratio} arbitrary target geometries.
  \subref{subfig:grid_peephole_optimization} Number of operations for grid and arbitrary geometries in the proposed algorithm. Blue and orange lines correspond to grid target geometries with and without peephole optimization, respectively, while the green line represents arbitrary target geometries.}  \label{fig:inner_analysis}
  \end{figure}

\paragraph{Detailed analysis of grid formation plans.}

We next performed a detailed analysis of grid formation plans.
\Cref{subfig:atom_moved_parallelism} shows the number of atoms transported simultaneously, highlighting the superior parallelism achieved by the proposed algorithm.
As illustrated, each operation in the proposed algorithm moves 1.8--116$\times$ more atoms compared to the PSC algorithm.
This advantage over the PSC algorithm is evident in the asymptotic scaling behavior shown in \Cref{subfig:grid_op_count}.
By fully reflecting the underlying hardware capabilities, the proposed algorithm achieves $\bigO(N)$ scaling through the parallel transport of multiple atoms, whereas the PSC algorithm is fundamentally limited to $\bigO(\sqrt N)$ parallelism.
As a result, the slope of the proposed algorithm in \Cref{subfig:atom_moved_parallelism} is approximately twice as large as that of the PSC algorithm.
This massive degree of parallelism is the primary reason for the shorter reconfiguration time observed in \Cref{subfig:grid_reconfiguration_time} because transporting $N$ atoms in parallel provides an $N$-fold increase in transportation throughput.

One potential concern with the proposed algorithm is the increase in the travel distance required for each atom, as shown in \Cref{subfig:atom_distance}, where for sufficiently large $N$ the per-atom travel distance exceeds that of the PSC algorithm by at most a factor of 2.75; 
we consider this overhead to be acceptable because the travel distance per atom follows $\bigO(n)$ scaling.
Therefore, the proposed algorithm effectively exploits massive parallelism while incurring a reasonable increase in travel distance.

Finally, we analyzed the number of operations per atom.
\Cref{subfig:grid_avg_moved_atoms} shows that the proposed method increases the number of operations with $\bigO(n)$ linear scaling, whereas the PSC algorithm maintains an approximately constant number of operations close to two.
This increase highlights a limitation of the proposed algorithm because it prioritizes minimizing total reconfiguration time through massive parallelism at the cost of subjecting individual atoms to more frequent manipulations.
Because a larger number of capture-and-release cycles can increase the risk of atom loss owing to heating, addressing this limitation remains an important direction for future research.

\paragraph{Detailed comparison with PSC.}
We compared our leftward alignment algorithm (\Cref{alg:left-align}) with the corresponding component in the PSC algorithm, namely the Tetrimino construction algorithm.
Both algorithms perform row-by-row compression of atoms, which largely determines overall reconfiguration performance.
Compression performance was evaluated using two representative tasks:
\begin{itemize}
    \item Gathering task. The algorithm gathers atoms toward the center of each row.
    \item Randomizing task. The algorithm moves atoms to randomized positions within each row.
\end{itemize}

As shown in \Cref{subfig:gather_rowwise_reconf_time}, the proposed algorithm achieves a smaller total transportation cost than the PSC algorithm for gathering tasks.
Similar trends are observed for randomizing tasks in \Cref{subfig:random_rowwise_reconf_time}, although the performance gap between the proposed and PSC algorithms becomes smaller while overall scalability is maintained.

The high efficiency of the proposed algorithm primarily stems from its strong compression performance because the total transportation cost observed in \Cref{subfig:grid_movement_cost} is nearly identical to that in \Cref{subfig:gather_rowwise_reconf_time}.
This indicates that the overall reconfiguration process is largely dominated by the leftward alignment step, which in turn determines the reconfiguration time shown in \Cref{subfig:grid_reconfiguration_time}.
Therefore, the performance of the reconfiguration plan is largely determined by the effectiveness of row-wise compression.

\subsection{Evaluation of planning algorithms}

We next evaluated the proposed algorithm in terms of planning time and reliability, where reliability is measured by the success rate, with a success rate of 100\% indicating that the planning algorithm always produces a valid reconfiguration plan for the given target geometries.
As shown in \Cref{subfig:grid_planning_time}, the proposed algorithm and the PSC algorithm exhibit similar planning times for grid geometries, while the Hungarian algorithm shows a prohibitive increase in planning time as $N$ increases.
For arbitrary geometries, \Cref{subfig:arbitrary_planning_time} shows that the planning time of the proposed method is approximately 36\%--54\% longer than that of the PSC algorithm.
Overall, these results confirm that the proposed algorithm consistently runs successfully within a reasonable planning time.

\Cref{subfig:grid_success_rate} shows that both the proposed algorithm and the Hungarian algorithm consistently produce valid reconfiguration plans for grid formation problems.
A key design principle of the proposed algorithm is the use of a fallback mechanism to handle potential failures.
In most cases, the algorithm applies the two-step strategy to achieve efficiency comparable to that of the PSC algorithm, while falling back to the three-step strategy when necessary to guarantee correctness.
This fallback mechanism ensures that the proposed algorithm always generates a valid reconfiguration plan, even for challenging edge cases, and the experimental results confirm its effectiveness in practical grid formation scenarios.
In contrast, the PSC algorithm fails to find a solution in some grid formation instances owing to the heuristic limitations of its two-step strategy.

For arbitrary formation problems, all compared algorithms achieved a $100\%$ success rate, as shown in \Cref{subfig:arbitrary_success_rate}.
These results help explain why the PSC algorithm exhibits lower success rates when $N < 10^3$ in \Cref{subfig:grid_success_rate}.
The reduced success rate can be attributed to the fact that the right-hand side of \Cref{eq:existence_iff}, corresponding to the Gale--Ryser inequality, tends to be smaller for a small number $N$ of atoms, which causes the inequality to fail more frequently under high-variance conditions.
In particular, a higher atom density leads to greater variance in row sums and column sums, which increases the value of the left-hand side of \Cref{eq:existence_iff} and makes the condition more difficult to satisfy, especially when the right-hand side is tightly constrained.
Overall, the observations indicate that the proposed algorithm remains robust and effective for arbitrary target geometries and, moreover, demonstrates particularly strong performance for grid target geometries.

Finally, we evaluated the effects of the three-step strategy and the peephole optimization.
\Cref{subfig:grid_reconf_three_step_ratio} shows that up to 4\% of grid formation problems could not be solved without using the three-step strategy.
However, the three-step strategy was not applied to arbitrary formation problems, as shown in \Cref{subfig:arbitrary_reconf_three_step_ratio}.
This demonstrates that the three-step strategy is essential to guarantee a valid reconfiguration plan for any problem.
Regarding the impact of peephole optimization, the blue and orange lines in \Cref{subfig:grid_peephole_optimization} show that it effectively reduces the number of operations by approximately 32$\%$.
Without the peephole optimization, the grid formation strategy exhibits performance similar to that of the two-step strategy used for arbitrary formation.
Therefore, the peephole optimization is a key component for achieving high efficiency in grid formation problems.

\section{Conclusion}
In this study, we proposed a planning algorithm that efficiently creates large-scale defect-free atom arrays.
For a system with $N$ atoms, the algorithm produces a reconfiguration plan that achieves the target atom geometry in $\bigO(\sqrt{N})$ time by exploiting a 2D lattice pattern generated by a two-axis AOD.

The algorithm is built on a divide-and-conquer framework that leverages inherent hardware parallelism, in which the decomposer first splits the problem into at most three 1D shuttling tasks, and the 1D shuttling solver then executes each task in $\bigO(\sqrt{N})$ time by transporting $\bigO(N)$ atoms in parallel.
The high efficiency of the approach primarily arises from the 1D shuttling solver, which constructs a left-aligned atom geometry using simple row-wise operations.
By relying on the Gale--Ryser theorem~\cite{Gale1957-kk,Ryser1957-kk}, the proposed algorithm guarantees a highly reliable solution for arbitrary target geometries.
In addition, we introduce a peephole optimization technique that further improves reconfiguration efficiency for grid target geometries.

Numerical simulations showed substantial reductions in total transportation cost compared to previous SOTA algorithms, highlighting the proposed method's potential for large-scale quantum systems.
The total transportation cost decreases from $\bigO(N)$ to $\bigO(\sqrt{N})$, while the operation count remains at $\bigO(\sqrt{N})$, consistent with theoretical analysis.
For instance, with $N=2 \times 10^5$, the total transportation cost of the proposed method is only 1/7 of that of the prior SOTA algorithms.
The number of operations in the proposed method maintains $\bigO(\sqrt{N})$ scaling, matching that of the previous algorithms.

One direction for future work is to address atom loss during reconfiguration, a critical factor for realizing truly scalable systems.
To overcome this limitation and to enable long-duration operation of large-scale systems, several groups have recently developed schemes for continuous and repeated loading of atoms from auxiliary reservoirs~\cite{cite:lukin_3000, arXiv_2506_15633, cite:iterative_assembly, cite:optical_lattices}, where efficient atom reconfiguration between a storage and a reservoir is required.
Extending the present approach to such architectures is an interesting direction.

\section*{Acknowledgments}
This work was supported by JSPS KAKENHI Grant Numbers 23K18464 and 23K28061, MEXT Quantum Leap Flagship Program (MEXT Q-LEAP) Grant Number JPMXS0118069021, and JST Moonshot R\&D Program Grant Number JPMJMS2269.

\bibliographystyle{unsrtnat}
\bibliography{citation}

\onecolumn\newpage
\appendix

\section{General Form of 2D Tweezer System} \label{appendix:difficult_operations}
We present a general form of the 2D tweezer system used in previous studies~\cite{cite:atom_movr,cite:multiple_tweezer_kagome,cite:multiple_tweezer_grid}.
This general form, $(n,S,\Phi',F')$, was used to evaluate the previous algorithms in \Cref{sec:evaluation}.
This previous model assumes that the tweezer system supports long-distance transportation operations, in which atoms can move from their initial sites to target sites through multiple relay points, and additionally allows captured atoms to move in different directions simultaneously.
We introduce these flexible but more complex operations as an alternative to our model $(n,S,\Phi,F)$, which uniformly transports captured atoms in the same direction.

\paragraph{2D Tweezer Operations $\Phi'$.}
A primitive operator $\phi \in \Phi': S \rightarrow S$ is represented as a 3-tuple $(I_0, J_0, \Delta)$, where $I_0 \subseteq X$ and $J_0 \subseteq X$ denote the initial sets of rows and columns, respectively, that define a lattice pattern generated by the mobile tweezers.
The vector $\Delta = ( \delta_1, \delta_2, \dots, \delta_{|\Delta|} )$ specifies the sequence of $|\Delta| \in \mathbb{N}$ moves required by the operator $\phi$.
The $m$-th move $\delta_m: X^2 \rightarrow X^2$, for $1 \leq m \leq |\Delta|$, is defined as follows:
\begin{equation}\label{eq:axismove}
\delta_m(i,j) \mapsto (\delta_m^\mathrm{(row)}(i),\ \delta_m^\mathrm{(col)}(j)),
\end{equation}
where $\delta_m^\mathrm{(row)}: X \rightarrow X$ and $\delta_m^\mathrm{(col)}: X \rightarrow X$ are the mappings that return the set of rows and columns after moving, respectively.
An operator $\phi$ transports the atom from the initial site $(i_0,j_0) \in I_0 \times J_0$ to the target site $(i_{|\Delta|}, j_{|\Delta|})$ via $(|\Delta|-1)$ relay points according to the following recurrence:
\begin{equation}\label{eq:atomPos}
(i_m,j_m) = \delta_m(i_{m-1},j_{m-1}),
\end{equation}
where $1 \leq m \leq |\Delta|$.
Accordingly, the mobile tweezers move the lattice pattern according to the following recurrence:
\begin{equation} \label{eq:tweezerPos}
    (I_m, J_m) =
        ( \{\delta^{(\mathrm{row})}_m(i) \mid i \in I_{m-1}\},       \{\delta^{(\mathrm{col})}_m(j) \mid j \in J_{m-1}\} ),
\end{equation}
where $1 \leq m \leq |\Delta|$.

Collision-free atom transportation can be ensured by imposing the following two constraints, which prevent (1) collisions between moving atoms and (2) collisions between moving and stationary atoms.
\begin{description}
    \item[Constraint 1: Intra-tweezer order preservation.]
The mobile tweezer must maintain strictly increasing row and column indices to prevent overtaking between moving atoms.
Formally, the tweezer operation must satisfy:
\begin{align}
    \forall m \in \{1, 2, \dots, |\Delta|\}, \forall i',i'' \in I_{m-1}: i' < i'' \implies \delta_m^\mathrm{(row)}(i') < \delta_m^\mathrm{(row)}(i''), \\
    \forall m \in \{1, 2, \dots, |\Delta|\}, \forall j', j'' \in J_{m-1}: j' < j'' \implies \delta_m^\mathrm{(col)}(j') < \delta_m^\mathrm{(col)}(j''). \label{eq:colOrder}
\end{align}

    \item[Constraint 2: Collision-free routing.]
The mobile tweezer must avoid collisions between captured and uncaptured atoms.
Specifically, an atom initially at site $(i_0, j_0) \in (I_0 \times J_0)$ must be transported along a collision-free path, where all intermediate sites are vacant:
\begin{equation}\label{eq:duplicate}
\forall (i,j) \in \bigcup_{m=0}^{|\Delta|} (I_m \times J_m) \setminus (I_0 \times J_0): A_{i,j} = 1 \implies A_{i_0,j_0} = 0.
\end{equation}
\end{description}

For a tweezer operation $\phi$ that satisfies the abovementioned constraints, the transportation cost $T(\phi)$ is defined as follows:
\begin{align}
    T(\phi) &= \sum_{m = 1}^{|\Delta|} \sqrt{\mathrm{dist}(\delta_m)}, \label{eq:sqrt_complex}
\end{align}
where $\textrm{dist}(\delta_m)$ is the maximum distance traveled by any atom during the $m$-th move $\delta_m$, defined as
\begin{equation}
  \textrm{dist}(\delta_m) = \mathrm{max}_{(i,j) \in I_{m-1} \times J_{m-1}}\left(\sqrt{\left|i - \delta_m^\mathrm{(row)}(i)\right|^2 + \left|j - \delta_m^\mathrm{(col)}(j)\right|^2} \right).
\end{equation}
The sqrt cost model in \Cref{eq:sqrt_complex} assumes that transportation speed is primarily limited by acceleration~\cite{cite:sqrt_time_hardware1, cite:multiple_tweezer_kagome, cite:single_tweezer_3}.
In other words, this model abstracts shuttling protocols in which atoms are transported under constant acceleration.

An alternative to the sqrt cost model is the linear cost model:
\begin{align}
    T(\phi) &= \sum_{m = 1}^{|\Delta|} \mathrm{dist}(\delta_m). \label{eq:linear_complex}
\end{align}
The linear cost model assumes that transportation speed is primarily limited by velocity~\cite{cite:Hungarian, cite:logical_quantum_processor, graham2022multi}.
This model represents shuttling protocols in which atoms are moved at a constant velocity.

Note that, for the proposed algorithm, the linear cost model is equivalent to the sqrt cost model because their objective functions coincide when the transportation distance is $1$, i.e., $\mathrm{dist}(\cdot)=1$.
Owing to this equivalence, the objective function can be expressed in a simplified form, as given in \Cref{eq:object}.

\paragraph{Objective Function $F'$.}

We estimate the reconfiguration time of a plan $p = (\phi_1, \phi_2, \dots, \phi_{|p|})$ using the following objective function:
\begin{equation}\label{eq:object2}
    F'(p, t_1, t_2) = |p| \, t_1 + \sum_{k=1}^{|p|} T(\phi_k) \, t_2,
\end{equation}
where the first and second terms represent the operation cycle time and the transportation time, respectively (see \Cref{subsec:model}). In the previous model, the total transportation cost is given by $\sum_{k=1}^{|p|} T(\phi_k)$, as shown in \Cref{eq:object2}.

In contrast to the complex operations described above, each 2D tweezer operation $\phi$ in our model incurs a fixed transportation cost of $T(\phi) = 1$ owing to $\mathrm{dist}(\cdot)=1$.
Therefore, the total transportation cost in our model equals the number $|p|$ of 2D tweezer operations, as detailed in \Cref{subsec:model}.

\section{Proof of Correctness of the 1D Shuttling Solver}
\label{appendix:1d_shuttling_solver}

We present a proof of the correctness of our 1D shuttling solver.
First, we adopt the convention $\forall i \in X: A_{i, n+1} = 0$ and define a parameterized version of the left-aligned property.
\begin{definition}
Let $A \in S$ be a geometry and $x$ a non-negative integer.
We say that $A$ is \textit{$x$-partially left-aligned} if it satisfies the inequality in \Cref{eq:partially-left-aligned-property}.
\begin{equation}\label{eq:partially-left-aligned-property}
    \forall i \in X, \forall j \in \{x,x + 1,\dots,n\}: A_{i,j} \geq A_{i, j+1}.
\end{equation}
\end{definition}
Note that the left-aligned property is equivalent to the $1$-partially left-aligned property.

We now show that any plan generated by \Cref{alg:left-align} transforms the initial atom geometry $A^\mathrm{(int)}$ into the left-aligned geometry $A^\mathrm{(left)}$.
\begin{lemma}\label{lemma:left-align}
Let $A^{(\mathrm{int})}$ be an arbitrary geometry, and $p$ be a plan output by \Cref{alg:left-align}.
Then the resulting geometry $A^\mathrm{(left)} = p(A^{(\mathrm{int})})$ is left-aligned.
\end{lemma}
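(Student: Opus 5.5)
The plan is to prove the stronger invariant that, after the iteration of \Cref{alg:left-align} with loop index $x$, the current geometry is $x$-partially left-aligned. The claim then follows by backward induction on $x$, since $1$-partial left-alignment is exactly left-alignment. For the base case, before any operation is applied (formally $x=n$), the geometry is $n$-partially left-aligned: by the convention $A_{i,n+1}=0$, the condition $A_{i,n}\ge A_{i,n+1}$ holds trivially.

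\textbf{Preliminary observation.} Before the inductive step I would note that column $x$ has not been touched when iteration $x$ starts. An earlier iteration $x'>x$ uses $J=\{x'+1,\dots,n\}$, so it moves atoms only from columns $\ge x'+1$ into columns $\ge x'$. Hence columns $1,\dots,x$ are unchanged. In particular, the current column $x$ equals $A^{(\mathrm{int})}_{\cdot,x}$, so the set $I$ computed on line~4 is precisely the set of rows whose current site $(i,x)$ is empty. This point is slightly subtle because the algorithm reads the \emph{initial} geometry rather than the current one, and I expect it to be the main thing to get right.

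\textbf{Feasibility of the step.} Next I would check that $\mathrm{Left}(I,J)$ with $J=\{x+1,\dots,n\}$ is feasible, i.e.\ that no site receives two atoms. Consider a row $i\in I$. Atoms in columns $x+2,\dots,n$ move onto sites vacated simultaneously by the moving atoms of the same row. The atom from column $x+1$ lands on $(i,x)$, which is empty by the observation above. Rows outside $I$ are untouched. So every site holds at most one atom afterwards, and row sums are preserved.

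\textbf{Inductive step.} Assume the geometry is $(x+1)$-partially left-aligned before iteration $x$. I would treat the two kinds of rows separately.
\begin{itemize}
\item For $i\in I$, the new row restricted to columns $x,\dots,n$ is the old segment on columns $x+1,\dots,n$ shifted left by one, followed by a $0$ in column $n$. The old segment is non-increasing by hypothesis, and appending a $0$ keeps it non-increasing. Hence $A_{i,j}\ge A_{i,j+1}$ for all $j\ge x$.
\item For $i\notin I$, we have $A_{i,x}=1$, which dominates $A_{i,x+1}$. Columns $x+1,\dots,n$ are unchanged and already non-increasing.
\end{itemize}
Thus the geometry is $x$-partially left-aligned after iteration $x$, which completes the induction. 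Together with row-sum preservation, this also shows that \Cref{constraint:left-align} holds.
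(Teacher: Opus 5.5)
Your proof is correct and follows the same backward induction as the paper: the invariant is that the geometry is $x$-partially left-aligned after loop $x$, and the step splits on whether site $(i,x)$ is empty (a shifted row versus an untouched row). Your observation that columns $1,\dots,x$ are untouched before iteration $x$, together with your explicit feasibility check, makes rigorous a step the paper leaves implicit. That observation guarantees that $I$, computed from $A^{(\mathrm{int})}$, is exactly the set of rows whose current site is empty, whereas the paper simply splits on $A^{(\gamma)}_{i,x-1}$ instead of $A^{(\mathrm{int})}_{i,x-1}$.
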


\begin{proof}
We proceed by mathematical induction.
Let $Q_1(x)$ denote the statement that the intermediate geometry $A^{(\gamma)}$ is $x$-partially left-aligned at the end of loop $x$ in \Cref{alg:left-align} (lines 2--6).

\textbf{Base step:}
For $x = n$, the statement $Q_1(n)$ holds because $A_{i, n+1} = 0$.

\textbf{Induction step:}
Assume that $Q_1(x)$ holds for some integer $x \ge 2$.
Let $A^{(\gamma)}$ be the geometry at the end of loop $x$ (and thus the start of loop ($x-1$)), and let $A^{(\gamma^\prime)}$ be the geometry at the end of loop $x-1$.
We aim to show that $Q_1(x-1)$ also holds, i.e., that
\begin{equation}\label{appendix:left-align-next}
  \forall (i,j) \in X^2: j \geq x-1 \implies A^{(\gamma^\prime)}_{i,j} \geq A^{(\gamma^\prime)}_{i,j+1}. 
\end{equation}

We consider the effect of the $\mathrm{Left}(I,J)$ operation applied during loop $x-1$ (line 5) for an arbitrary row~$i$.
There are two possibilities for the geometry $A^{(\gamma)}$ before this operation:

\textbf{(i) Case $A^{(\gamma)}_{i,x-1} = 0$: }
If the element at column $x-1$ is 0, the $\mathrm{Left}(I,J)$ operation shifts the portion of row $i$ from column $x$ onwards one position to the left. The row $i$ of new geometry $A^{(\gamma^\prime)}$ is then defined as
\begin{equation}
A^{(\gamma^\prime)}_{i,j} = 
\begin{cases}
A^{(\gamma)}_{i,j}, & \text{if $1 \leq j < x-1$,} \\
A^{(\gamma)}_{i,j+1}, & \text{if $x-1 \leq j < n$,} \\
0, & \text{if $j = n$.}
\end{cases}
\end{equation}
We must verify that $A^{(\gamma^\prime)}_{i,j} \geq A^{(\gamma^\prime)}_{i,j+1}$ holds for all $j \geq x-1$.
\begin{itemize}
    \item For $x-1 \leq j < n-1$: We have $A^{(\gamma^\prime)}_{i,j} = A^{(\gamma)}_{i,j+1}$ and $A^{(\gamma^\prime)}_{i,j+1} = A^{(\gamma)}_{i,j+2}$. Because $j+1 \geq x$, the inductive hypothesis $Q_1(x)$ guarantees that $A^{(\gamma)}_{i,j+1} \geq A^{(\gamma)}_{i,j+2}$. Therefore, $A^{(\gamma^\prime)}_{i,j} \geq A^{(\gamma^\prime)}_{i,j+1}$.
    \item For $j = n-1$: We have $A^{(\gamma^\prime)}_{i,n-1} = A^{(\gamma)}_{i,n}$ and $A^{(\gamma^\prime)}_{i,n} = 0$. Because $A^{(\gamma)}_{i,n}$ is either 0 or 1, the inequality $A^{(\gamma^\prime)}_{i,n-1} \geq A^{(\gamma^\prime)}_{i,n}$ holds.
\end{itemize}
Thus, case (i) satisfies the condition in \Cref{appendix:left-align-next}.

\textbf{(ii) Case $A^{(\gamma)}_{i,x-1} = 1$:}
If the element at column $x-1$ is $1$, the operation leaves row $i$ unchanged.
Thus, we have $A^{(\gamma^\prime)}_{i,j} = A^{(\gamma)}_{i,j}$ for $j \geq x-1$. 
We then need to show that $A^{(\gamma)}_{i,j} \geq A^{(\gamma)}_{i,j+1}$ for $j \geq x-1$.
\begin{itemize}
    \item For $j \geq x$: This follows directly from the inductive hypothesis $Q_1(x)$.
    \item For $j = x-1$: We need $A^{(\gamma)}_{i,x-1} \geq A^{(\gamma)}_{i,x}$. Because $A^{(\gamma)}_{i,x-1} = 1$ in this case, and $A^{(\gamma)}_{i,x}$ is either $0$ or $1$, the inequality $1 \geq A^{(\gamma)}_{i,x}$ always holds.
\end{itemize}
Thus, case (ii) satisfies the condition in \Cref{appendix:left-align-next}.

\textbf{Conclusion:}
We have shown that if $Q_1(x)$ holds, then $Q_1(x-1)$ also holds.
Because the base step $Q_1(n)$ is satisfied, the principle of mathematical induction implies that $Q_1(1)$ holds.
As the algorithm terminates after the last loop ($x=1$), the final output $A^\mathrm{(left)}$ satisfies $Q_1(1)$.
Because the left-aligned property is equivalent to the $1$-partially left-aligned property, this establishes that $A^\mathrm{(left)}$ is indeed left-aligned, as required.
\end{proof}

Using an analogous inductive argument, but proceeding in the opposite direction, it can also be shown that the output plan generated by \Cref{alg:drop-off} transforms the left-aligned geometry $A^\mathrm{(left)}$ into the target geometry $A^\mathrm{(tgt)}$.
\begin{lemma}\label{lemma:inverse_left_align}
Let $A^{(\mathrm{tgt})}$ be an arbitrary geometry, and let $p$ be a plan output by \Cref{alg:drop-off} with input $A^{(\mathrm{tgt})}$.
Let $A^\mathrm{(left)}$ be a left-aligned geometry that satisfies
\begin{equation}\label{eq:row_sums_same}
    \forall i\in X:\sum^{n}_{j=1} A^{(\mathrm{left})}_{i,j} = \sum^{n}_{j=1} A^{(\mathrm{tgt})}_{i,j}.
\end{equation}
Then, \Cref{eq:lemma_dropoff_result} holds.
\begin{equation}\label{eq:lemma_dropoff_result}
    p(A^\mathrm{(left)}) = A^\mathrm{(tgt)}
\end{equation}
\end{lemma}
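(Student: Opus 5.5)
Since every operation in \Cref{alg:drop-off} acts row by row, and the membership of row $i$ in $I$ depends only on row $i$ of $A^{(\mathrm{tgt})}$, we can argue for a single fixed row $i$. Write $a = (a_1,\dots,a_n)$ for row $i$ of $A^{(\mathrm{tgt})}$ and $r = r^{(\mathrm{tgt})}_i$. By \Cref{eq:row_sums_same} and left-alignment, row $i$ of $A^{(\mathrm{left})}$ is $1^r0^{n-r}$. We prove by induction on $x$ a loop invariant $Q_2(x)$ for the geometry $A^{(\gamma)}$ at the end of loop $x$:
\begin{itemize}
\item the prefix $A^{(\gamma)}_{i,1..x}$ equals $a_{1..x}$;
\item the suffix $A^{(\gamma)}_{i,x+1..n}$ is the block $1^{s}0^{n-x-s}$, where $s = r - \sum_{j\le x} a_j$ is the number of atoms still to be delivered.
\end{itemize}
The base case is $x=0$, before any loop, which is exactly the left-aligned row. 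Once $Q_2(n-1)$ holds, the suffix is the single entry at column $n$, equal to $s = r - \sum_{j\le n-1}a_j = a_n$. So the whole row equals $a$, giving \Cref{eq:lemma_dropoff_result}.

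For the induction step, assume $Q_2(x-1)$, so the suffix from column $x$ on is $1^{s}0^{n-x+1-s}$. We split into two cases.

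\textbf{Case $a_x = 1$.} Row $i\notin I$, so the row is unchanged. We need $s\ge 1$, which holds because $s = \sum_{j\ge x} a_j \ge a_x = 1$. Then column $x$ holds a $1$, matching $a_x$. The new suffix is $1^{s-1}0^{\dots}$, matching the new count $s-1$.

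\textbf{Case $a_x = 0$.} Row $i\in I$, and $\mathrm{Right}(\{i\},\{x,\dots,n-1\})$ shifts columns $x..n-1$ to $x+1..n$. The shifted pattern becomes $0$ at column $x$, followed by $1^{s}0^{\dots}$ on columns $x+1..n$, with $s$ unchanged. The only thing to check is that the atom at column $n-1$ does not land on an atom at column $n$, and that the old content of column $n$ is not lost.

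This feasibility check is the main obstacle. I would handle it through the counting identity $s = \sum_{j\ge x} a_j \le n-x$, which holds because $a_x = 0$. It shows the block $1^s$ occupies at most columns $x..n-1$, so column $n$ is empty before the shift. Hence no two atoms collide, no atom leaves the array, and the configuration stays in $S$.

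The stationary columns $1..x-1$ are untouched in both cases, which preserves the prefix part of the invariant. This completes the induction. Running the same argument for every row (rows are independent), we obtain $p(A^{(\mathrm{left})}) = A^{(\mathrm{tgt})}$, and every operation in the plan is feasible.
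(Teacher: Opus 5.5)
Your proof is correct and follows essentially the same induction as the paper. The invariant is per row: the processed prefix agrees with $A^{(\mathrm{tgt})}$ and the remaining suffix is left-aligned, and the step splits on whether $A^{(\mathrm{tgt})}_{i,x}$ is $0$ or $1$. Tracking the exact count $s=\sum_{j\ge x}a_j$ lets you handle the case $a_x=1$ directly rather than by the paper's row-sum contradiction. It also gives the check that column $n$ is empty before each right shift, so no collision occurs, which is a feasibility point the paper's proof leaves implicit.
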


\begin{proof}
We proceed by mathematical induction.
Let $A^{(\gamma)}$ denote the geometry at the start of loop $x$, and $A^{(\gamma^\prime)}$ the geometry at the end of loop $x$ (and thus the start of loop $x+1$).
Let $Q_2(x)$ be the statement that both of the following conditions hold at the start of loop $x$ in \Cref{alg:drop-off} (lines 2--6).
\begin{itemize}
    \item The intermediate geometry $A^{(\gamma)}$ is $x$-partially left-aligned.
    \item The left $x-1$ columns of $A^{(\gamma)}$ match those of $A^\mathrm{(tgt)}$ as follows.
    \begin{equation}\label{eq:left-matches}
        \forall i \in X,\forall j \in \{1,2,\dots,x-1\}: A^{(\gamma)}_{i,j} = A^\mathrm{(tgt)}_{i,j}
    \end{equation}
\end{itemize}

\textbf{Base step:}
For $x = 1$, the statement $Q_2(1)$ holds by definition.

\textbf{Induction step:}
Assume that $Q_2(x)$ holds for some integer $x \ge 1$. 
We aim to show that $Q_2(x+1)$ also holds, which consists of two claims.
\begin{enumerate}
    \item $A^\mathrm{(\gamma^\prime)}$ is $(x+1)$-partially left-aligned.
    \item The left $x$ columns of $A^\mathrm{(\gamma^\prime)}$ match those of $A^\mathrm{(tgt)}$ as follows: \begin{equation}\label{eq:partial_match_x+1}
        \forall i \in X,\forall j \in \{1,2,\dots,x\}: A^{(\gamma^\prime)}_{i,j} = A^\mathrm{(tgt)}_{i,j}
    \end{equation}
\end{enumerate}

We consider the effect of the $\mathrm{Right}(I,J)$ operation applied during loop $x$ (line 5) for an arbitrary row~$i$.
There are two possibilities for the geometry $A^{(\mathrm{tgt})}$ before this operation:

\textbf{(i) Case $A^{(\mathrm{tgt})}_{i,x} = 0$: }
The $\mathrm{Right}(I,J)$ operation shifts the portion of row $i$ from column $x$ onwards one position to the right. The row $i$ of new geometry $A^{(\gamma^\prime)}$ is then defined as
\begin{equation}
A^{(\gamma^\prime)}_{i,j} = 
\begin{cases}
A^{(\gamma)}_{i,j}, & \text{if $1 \leq j < x$,} \\
0, & \text{if $j = x$,} \\
A^{(\gamma)}_{i,j-1}, & \text{if $x < j \leq n$.}
\end{cases}
\end{equation}
With the precondition $Q_2(x)$, we can confirm that $A^{(\gamma^\prime)}$ is $(x+1)$-partially left-aligned and satisfies \Cref{eq:partial_match_x+1}.
Thus, case (i) satisfies $Q_2(x+1)$.

\textbf{(ii) Case $A^{(\mathrm{tgt})}_{i,x} = 1$:}
According to \Cref{alg:drop-off}, the operation leaves row $i$ unchanged.
Thus, $A^{(\gamma^\prime)}_{i,j} = A^{(\gamma)}_{i,j}$.
Hence, $A^{(\gamma^\prime)}$ is $(x+1)$-partially left-aligned.
We now need to verify that \Cref{eq:partial_match_x+1} holds for $A^{(\gamma^\prime)}$.
We prove this by contradiction.
Suppose that $A^{(\gamma)}_{i, x} = 0$.
Because $A^{(\gamma)}$ is $x$-partially left-aligned, \Cref{eq:tempo_1} holds.
\begin{equation}\label{eq:tempo_1}
    \forall j \in \{x,x+1, \dots,n\}: A^{(\gamma)}_{i,j} = 0.
\end{equation}
Because the left $x-1$ columns of $A^{(\gamma)}$ match those of $A^\mathrm{(tgt)}$, \Cref{eq:tempo_2} holds.
\begin{equation}\label{eq:tempo_2}
    \sum^n_{j=1} A^{(\gamma)}_{i,j} < \sum^n_{j=1} A^{(\mathrm{tgt})}_{i,j}
\end{equation}
This contradicts \Cref{eq:row_sums_same}.
Therefore, $A^{(\gamma)}_{i, x} = 1$ holds.
This means that \Cref{eq:partial_match_x+1} holds for $A^{(\gamma^\prime)}$ because row $i$ of $A^{(\gamma^\prime)}$ remains unchanged from $A^{(\gamma)}$.
Thus, case (ii) satisfies $Q_2(x+1)$.

\textbf{Conclusion:}
We have shown that if $Q_2(x)$ holds, then $Q_2(x+1)$ also holds.
Because the base step $Q_2(1)$ is satisfied, the principle of mathematical induction implies that $Q_2(n+1)$ holds.
Because the algorithm terminates after the last loop ($x=n$), the final output $p(A^\mathrm{(left)})$ satisfies $Q_2(n+1)$, which implies that $p(A^{\mathrm{(left)}})$ is equal to $A^{(\mathrm{tgt})}$.
This completes the proof.
\end{proof}

We are now ready to present the main theorem.

\begin{theorem}[1D Shuttling solver]\label{thm:1d_shuttling_problem}
    Let $A^\mathrm{(int)}$ and $A^\mathrm{(tgt)}$ be geometries that satisfy at least one of \Cref{eq:row_sum_eq_condition} or \Cref{eq:column_sum_eq_condition}.
    \begin{align}
        \forall i \in X: \sum^n_{j=1} A^{\mathrm{(int)}}_{i,j} = \sum^n_{j=1} A^{\mathrm{(tgt)}}_{i,j}, \label{eq:row_sum_eq_condition} \\
        \forall j \in X: \sum^n_{i=1} A^{\mathrm{(int)}}_{i,j} = \sum^n_{i=1} A^{\mathrm{(tgt)}}_{i,j} .\label{eq:column_sum_eq_condition}        
    \end{align}
    Then, there exists a feasible plan $p \in P$ that converts $A^{\mathrm{(int)}}$ to $A^\mathrm{(tgt)}$ with $|p| = 2(n-1)$.
\end{theorem}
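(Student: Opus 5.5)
We split into the two hypotheses and first treat the row-sum case \Cref{eq:row_sum_eq_condition}. The plan $p$ is the concatenation $p = p_2 \circ p_1$, where $p_1$ is the output of \Cref{alg:left-align} on input $A^\mathrm{(int)}$ and $p_2$ is the output of \Cref{alg:drop-off} on input $A^\mathrm{(tgt)}$. Each algorithm appends exactly one operation per loop iteration, and each loop runs $n-1$ times, so $|p| = 2(n-1)$.

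\textbf{Correctness.} By \Cref{lemma:left-align}, $A^\mathrm{(left)} = p_1(A^\mathrm{(int)})$ is left-aligned. We also need $A^\mathrm{(left)}$ to have the same row sums as $A^\mathrm{(int)}$, i.e.\ \Cref{constraint:left-align}. This holds because every $\mathrm{Left}(I,J)$ operation moves atoms only within their own rows and never discards one. Combining this with \Cref{eq:row_sum_eq_condition}, $A^\mathrm{(left)}$ satisfies \Cref{eq:row_sums_same}. Then \Cref{lemma:inverse_left_align} gives $p_2(A^\mathrm{(left)}) = A^\mathrm{(tgt)}$, hence $p(A^\mathrm{(int)}) = A^\mathrm{(tgt)}$.

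\textbf{Feasibility.} Every operation must keep at most one atom per site and keep every atom on the lattice. This is the step I expect to need the most care, since the lemmas describe the resulting geometry but do not explicitly check collisions.
\begin{itemize}
\item For $\mathrm{Left}(I,J)$ at iteration $x$, the rows in $I$ have an empty site at column $x$ in the \emph{current} geometry. The moving block in columns $x+1,\dots,n$ therefore shifts into that hole, and no atom leaves the array. I would note that the algorithm as written selects $I$ from $A^\mathrm{(int)}$, whereas the induction in \Cref{lemma:left-align} treats $I$ as the rows empty at column $x$ of the current geometry $A^{(\gamma)}$. I would read $I$ the latter way, consistent with the proof.
\item For $\mathrm{Right}(I,J)$ at iteration $x$ with $J = \{x,\dots,n-1\}$, the moving block lands in columns $x+1,\dots,n$ and stays on the lattice. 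For row $i \in I$, the column-$n$ atom does not move, so a collision could occur only if columns $n-1$ and $n$ were both occupied. I would rule this out by showing, via the invariant of \Cref{lemma:inverse_left_align} together with the row-sum count, that the atoms remaining at or beyond column $x$ number at most $n-x$ minus the zeros of $A^\mathrm{(tgt)}$ in row $i$ at columns $\ge x$. With the current geometry being $x$-partially left-aligned, this forces $A^{(\gamma)}_{i,n} = 0$ whenever $A^\mathrm{(tgt)}_{i,x} = 0$.
\end{itemize}

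For the column-sum case \Cref{eq:column_sum_eq_condition}, rotate (transpose) the geometry so that columns become rows. Apply the construction above, and map the operations back, with $\mathrm{Left}/\mathrm{Right}$ becoming $\mathrm{Up}/\mathrm{Down}$ and $I,J$ swapped. Feasibility and the count $2(n-1)$ are preserved under this symmetry, consistent with \Cref{lemma:packing_algorithm}.
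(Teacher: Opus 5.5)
Your proposal is correct and is essentially the paper's proof: the same plan $p_2\circ p_1$ built from \Cref{alg:left-align} and \Cref{alg:drop-off}, chained through \Cref{lemma:left-align} and \Cref{lemma:inverse_left_align}, with a rotation for the column case. Your explicit collision check for the $\mathrm{Right}$ steps supplies something the paper's induction uses tacitly, but fix the count: row $i$ holds exactly $(n-x+1)-z$ atoms in columns $\geq x$, where $z\geq 1$ is the number of zeros of $A^\mathrm{(tgt)}$ in those columns, hence at most $n-x$. Your worry about $I$ being read from $A^\mathrm{(int)}$ is unfounded, because the earlier iterations $n-1,\dots,x+1$ never touch column $x$, so $A^\mathrm{(int)}_{i,x}$ coincides with the current entry.
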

\begin{proof}
Without loss of generality, we consider the case where \Cref{eq:row_sum_eq_condition} is satisfied.
The proof for the other case, \Cref{eq:column_sum_eq_condition}, follows similarly by rotating the matrix by 90\textdegree.

Let $p_1$ and $p_2$ be the plans generated by \Cref{alg:left-align} with input $A^\mathrm{(int)}$ and by \Cref{alg:drop-off} with input $A^\mathrm{(tgt)}$, respectively.
We construct $A^\mathrm{(left)}$ such that $A^\mathrm{(left)} = p_1(A^\mathrm{(int)})$.
From \Cref{lemma:left-align}, the resulting geometry $A^\mathrm{(left)}$ is left-aligned.
Moreover, from \Cref{eq:row_sum_eq_condition}, $A^\mathrm{(left)}$ satisfies \Cref{eq:meet_in_the_middle}.
\begin{equation}\label{eq:meet_in_the_middle}
    \forall i\in X:\sum^{n}_{j=1} A^{(\mathrm{left})}_{i,j} = \sum^{n}_{j=1} A^{(\mathrm{tgt})}_{i,j}.
\end{equation}
Hence, from \Cref{lemma:inverse_left_align}, the composed plan $p_2 \circ p_1$ satisfies \Cref{eq:finalize}.
\begin{equation}\label{eq:finalize}
    p_2 \circ p_1 (A^\mathrm{(int)}) = p_2 (A^\mathrm{(left)}) = A^\mathrm{(tgt)}.
\end{equation}
Therefore, $p_2 \circ p_1$ is a feasible plan that converts $A^{\mathrm{(int)}}$ to $A^\mathrm{(tgt)}$.
From \Cref{alg:left-align,alg:drop-off}, we have $|p_2 \circ p_1| = 2(n-1)$.
This completes the proof.
\end{proof}

\section{Proof of Correctness of the Decomposer}
\label{appendix:three-step-strategy}

We show that the three-step strategy always decomposes the problem into three 1D shuttling tasks.
From \Cref{alg:leveling}, it follows directly that \Cref{corollary:leveling_prerequirements} holds.
\begin{corollary}\label{corollary:leveling_prerequirements}
    For any geometry $A^\mathrm{(int)} \in S$, there exists a geometry $A^\mathrm{(rbal)}$ satisfying 
    \begin{align}
        \forall j \in X&: \sum^{n}_{i=1} A^\mathrm{(int)}_{i,j} = \sum^n_{i=1} A^\mathrm{(rbal)}_{i,j}, \\
        \forall i^\prime, i^{\prime\prime} \in X&: \left| \sum^{n}_{j=1} A^\mathrm{(rbal)}_{i^\prime,j} - \sum^{n}_{j=1} A^\mathrm{(rbal)}_{i^{\prime\prime},j} \right| \leq 1.
    \end{align}
\end{corollary}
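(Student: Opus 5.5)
We prove \Cref{corollary:leveling_prerequirements} by showing that the output of \Cref{alg:leveling} is a valid geometry with both required properties. First, we check that the output is a binary matrix. Then we verify the column-sum condition. Finally, we verify the row-balance condition using the round-robin structure of the counter $t$.

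\textbf{Validity.} Each assignment $A^\mathrm{(rbal)}_{t+1,j} \gets 1$ writes to column $j$ only while $j$ is being processed. We must rule out the same site $(t+1,j)$ being written twice, since that would lose an atom. Within column $j$ the counter $t$ advances by one (mod $n$) after each write. Column $j$ has at most $n$ atoms ($c_j \le n$), so the $c_j$ writes in that column hit $c_j$ consecutive residues modulo $n$. These residues are pairwise distinct. Hence every write lands on a distinct, previously zero site.

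\textbf{Column sums.} From validity, column $j$ of $A^\mathrm{(rbal)}$ contains exactly as many ones as there are indices $i$ with $A^\mathrm{(int)}_{i,j}=1$. This gives the first equality, $\sum_i A^\mathrm{(rbal)}_{i,j} = \sum_i A^\mathrm{(int)}_{i,j}$.

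\textbf{Row balance.} Index the atoms globally in processing order as $0,1,\dots,N-1$. The algorithm places atom $k$ in row $(k \bmod n)+1$, because $t$ is never reset between columns. Row $i$ therefore receives exactly $|\{k<N : k\equiv i-1 \pmod n\}|$ atoms. This number is either $\lfloor N/n\rfloor$ or $\lceil N/n\rceil$. Any two row sums thus differ by at most $1$, which is the second condition.

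The only step needing care is validity: the no-collision argument relies on $c_j\le n$ together with the consecutive-residue observation. Once that is established, the remaining two properties are straightforward counting.
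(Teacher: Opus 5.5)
Your proof is correct and takes the same route as the paper, which simply says the corollary follows directly from the round-robin construction in \Cref{alg:leveling}. You supply the details the paper leaves implicit: writes within a column are distinct because $c_j \le n$, so column sums are preserved, and the global placement $k \mapsto (k \bmod n)+1$ gives every row sum as $\lfloor N/n\rfloor$ or $\lceil N/n\rceil$.
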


The correctness of the three-step strategy relies primarily on \Cref{lemma:existence_cfin}.
\begin{lemma}\label{lemma:existence_cfin}
Let $A^\mathrm{(tgt)}$ be an arbitrary geometry, and let $A^\mathrm{(rbal)}$ be a geometry satisfying
\begin{align}
    & \sum^n_{i=1} \sum^n_{j=1} A^\mathrm{(rbal)}_{i,j} = \sum^n_{i=1} \sum^n_{j=1} A^\mathrm{(tgt)}_{i,j}, \label{eq:D_3} \\
    \forall i^\prime, i^{\prime\prime} \in X:& \left| \sum^{n}_{j=1} A^\mathrm{(rbal)}_{i^\prime,j} - \sum^{n}_{j=1} A^\mathrm{(rbal)}_{i^{\prime\prime},j} \right| \leq 1. \label{eq:D_4}
\end{align}

Then, there exists a geometry $A^\mathrm{(cfin)}$ that satisfies
\begin{align}
    \forall i \in X&: \sum^n_{j=1} A^\mathrm{(cfin)}_{i,j} = \sum^n_{j=1} A^\mathrm{(rbal)}_{i,j}, \label{eq:D_5} \\
    \forall j \in X&: \sum^n_{i=1} A^\mathrm{(cfin)}_{i,j} = \sum^n_{i=1} A^\mathrm{(tgt)}_{i,j}. \label{eq:D_6}
\end{align}
\end{lemma}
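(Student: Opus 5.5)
The plan is to invoke the Gale--Ryser theorem (\Cref{theorem:gale_ryser}) with row sums $R = R^\mathrm{(rbal)}$ and column sums $C = C^\mathrm{(tgt)}$. The consistency condition \Cref{eq:condition} is exactly \Cref{eq:D_3}. So it suffices to verify \Cref{eq:existence_iff}, namely $\sum_{j=1}^k c'_j \le \sum_{i=1}^n \min(r_i,k)$ for every $k \in X$, where $C'$ is $C^\mathrm{(tgt)}$ sorted in non-increasing order.

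First I record two facts about the data. Let $N$ be the total number of atoms. By \Cref{eq:D_4} the row sums take only the values $q$ and $q+1$, where $q = \lfloor N/n \rfloor$. Exactly $s = N - nq$ rows have sum $q+1$, with $0 \le s < n$. Since $A^\mathrm{(tgt)}$ is an $n\times n$ binary matrix, each column sum satisfies $c'_j \le n$. Moreover, because $C'$ is non-increasing, its partial sums are concave in $k$: $\sum_{j\le k} c'_j \ge \frac{k}{n} N$, and likewise the left-hand side is at most $\min(kn, N)$.

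Next I compute the right-hand side $g(k) = \sum_i \min(r_i,k)$ explicitly, splitting into two cases.
\begin{itemize}
\item If $k \ge q+1$, every $\min(r_i,k) = r_i$. Hence $g(k) = N$, which dominates the left-hand side because the partial sums never exceed the total $N$.
\item If $k \le q$, every $\min(r_i,k) = k$. Hence $g(k) = nk$, which dominates the left-hand side because $c'_j \le n$ gives $\sum_{j\le k} c'_j \le kn$.
\end{itemize}
These two cases exhaust $k \in X$, so \Cref{eq:existence_iff} holds and \Cref{theorem:gale_ryser} yields $A^\mathrm{(cfin)}$ satisfying \Cref{eq:D_5,eq:D_6}.

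I do not expect a serious obstacle. The only points needing care are that the column sums of $A^\mathrm{(tgt)}$ are indeed at most $n$, and the boundary case $k = q$ when $s = 0$. The latter is covered by either bound, since then $nq = N$. I will also remark that \Cref{lemma:strongest} is the same statement phrased with an arbitrary $C'$ whose entries lie in $X$, so the identical argument proves it.
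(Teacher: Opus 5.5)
Your proof is correct and essentially the same as the paper's. Both apply the Gale--Ryser theorem with $R = R^{(\mathrm{rbal})}$ and $C = C^{(\mathrm{tgt})}$, use the balanced row sums to show that the right-hand side equals $\min(kn, N)$ (your cases $k \le q$ and $k \ge q+1$ are exactly the two branches of this minimum), and bound any $k$ column sums by $\min(kn, N)$. The concavity lower bound $\sum_{j \le k} c'_j \ge kN/n$ is never used and can be dropped.
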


\begin{proof}
    It suffices to verify that the Gale--Ryser inequality (\Cref{eq:existence_iff}) holds under the conditions of $A^\mathrm{(cfin)}$.
    From \Cref{eq:D_4}, for any non-negative number $x$, we have
\begin{equation}
     \sum^{n}_{i=1} \mathrm{min}\left( \sum^{n}_{j=1} A^\mathrm{(rbal)}_{i,j}, x \right) = \mathrm{min}\left(xn, \sum^n_{i=1} \sum^n_{j=1} A^\mathrm{(rbal)}_{i,j}\right).
\end{equation}
    From the definition of geometry, for any non-negative number $x$, we have
\begin{equation}
     \sum^{x}_{j=1} \sum^{n}_{i=1} A^\mathrm{(tgt)}_{i,j} \leq \mathrm{min}\left(xn, \sum^n_{i=1} \sum^n_{j=1} A^\mathrm{(tgt)}_{i,j}\right).
\end{equation}
    Using \Cref{eq:D_3}, we obtain
\begin{equation}
    \sum^{x}_{j=1} \sum^{n}_{i=1} A^\mathrm{(tgt)}_{i,j} \leq \sum^{n}_{i=1} \mathrm{min}\left( \sum^{n}_{j=1} A^\mathrm{(rbal)}_{i,j}, x \right).
\end{equation}
    Therefore, from \Cref{theorem:gale_ryser}, there exists a geometry $A^\mathrm{(cfin)}$ that satisfies \Cref{eq:D_5,eq:D_6}.
    This geometry $A^\mathrm{(cfin)}$ can be constructed using \Cref{alg:intermediate}.
\end{proof}

We now demonstrate the correctness of the three-step strategy.
\begin{theorem}
    Let $A^\mathrm{(int)}$ and $A^\mathrm{(tgt)}$ be geometries satisfying
    \begin{equation}
        \sum^n_{i=1} \sum^n_{j=1} A^\mathrm{(int)}_{i,j} = \sum^n_{i=1} \sum^n_{j=1} A^\mathrm{(tgt)}_{i,j}.
    \end{equation}
    Then, there exist geometries $A^\mathrm{(rbal)}$ and $A^\mathrm{(cfin)}$ that satisfy
    \begin{align}
        \forall j \in X&: \sum^{n}_{i=1} A^\mathrm{(int)}_{i,j} = \sum^n_{i=1} A^\mathrm{(rbal)}_{i,j}, \label{eq:D_12} \\
        \forall i \in X&: \sum^{n}_{j=1} A^\mathrm{(rbal)}_{i,j} = \sum^n_{j=1} A^\mathrm{(cfin)}_{i,j}, \label{eq:D_13}  \\
        \forall j \in X&: \sum^{n}_{i=1} A^\mathrm{(cfin)}_{i,j} = \sum^n_{i=1} A^\mathrm{(tgt)}_{i,j}.  \label{eq:D_14}
    \end{align}
\end{theorem}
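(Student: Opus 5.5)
The plan is to obtain the two intermediate geometries in sequence, using the two preceding results, and then check that the three sum conditions are exactly what those results give.

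\textbf{Step 1: the row-balanced geometry.} Apply \Cref{corollary:leveling_prerequirements} to $A^\mathrm{(int)}$. This yields $A^\mathrm{(rbal)}$ with the same column sums as $A^\mathrm{(int)}$, which is \Cref{eq:D_12}, and with row sums differing pairwise by at most one. Concretely, I would take the output of \Cref{alg:leveling}. For that algorithm two facts need checking.
\begin{itemize}
  \item Within a fixed column $j$, the round-robin counter $t$ visits distinct rows as long as $c_j \leq n$. This holds trivially, so each entry $A^\mathrm{(rbal)}_{t+1,j}$ is set at most once and the column sums are preserved.
  \item Because the counter runs continuously across columns, row $i$ receives either $\lfloor N/n\rfloor$ or $\lceil N/n\rceil$ atoms, where $N$ is the total number of atoms.
\end{itemize}

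\textbf{Step 2: the column-finalized geometry.} The row sums of $A^\mathrm{(rbal)}$ are balanced in the sense of \Cref{eq:D_4}. Its total atom count equals that of $A^\mathrm{(int)}$, since column sums are preserved, and hence equals that of $A^\mathrm{(tgt)}$ by the hypothesis. This is \Cref{eq:D_3}. Both hypotheses of \Cref{lemma:existence_cfin} therefore hold with this $A^\mathrm{(rbal)}$ and the given $A^\mathrm{(tgt)}$. The lemma produces $A^\mathrm{(cfin)}$ satisfying \Cref{eq:D_5,eq:D_6}, which are precisely \Cref{eq:D_13,eq:D_14}.

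\textbf{Main obstacle.} The only step with real content is the Gale--Ryser verification inside \Cref{lemma:existence_cfin}, and that lemma is already available. One point in it deserves care when applying \Cref{theorem:gale_ryser}. The inequality must hold for the \emph{sorted} column sums $c'_j$ of $A^\mathrm{(tgt)}$. The bound $\sum_{j\le k} c'_j \le \min(kn, N)$ still holds after sorting, because each column sum is at most $n$ and the total is $N$.

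The right-hand side also needs an identity for balanced rows. If every $r_i$ lies in $\{q, q+1\}$ with $N = qn + s$, then for each $k$ one has $\sum_i \min(r_i,k) = \min(kn, N)$. I would check this by splitting into the cases $k \le q$ and $k \ge q+1$. With both facts the Gale--Ryser condition holds, so the theorem follows by combining the two steps.
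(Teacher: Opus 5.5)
Your proof is correct and follows the paper's argument: you build $A^\mathrm{(rbal)}$ with the round-robin \Cref{alg:leveling} (\Cref{corollary:leveling_prerequirements}), then invoke \Cref{lemma:existence_cfin} with the row sums of $A^\mathrm{(rbal)}$ and the column sums of $A^\mathrm{(tgt)}$. You also fill in three points the paper leaves implicit: that \Cref{eq:D_3} follows because column sums are preserved, that the Gale--Ryser bound must use the sorted column sums, and the case check for $\sum_i \min(r_i,k)=\min(kn,N)$.
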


\begin{proof}
    We first obtain $A^\mathrm{(rbal)}$ by executing \Cref{alg:leveling} with input $A^\mathrm{(int)}$.
    From \Cref{corollary:leveling_prerequirements}, $A^\mathrm{(rbal)}$ satisfies \Cref{eq:D_12}.
    Let $R$ be the row sums of $A^\mathrm{(rbal)}$ and $C$ be the column sums of $A^\mathrm{(tgt)}$.
    From \Cref{lemma:existence_cfin}, we can construct $A^\mathrm{(cfin)}$ satisfying \Cref{eq:D_13,eq:D_14} by executing \Cref{alg:intermediate} with inputs $R$ and $C$.
    This completes the proof.
\end{proof}

\end{document}